\RequirePackage{etoolbox}
\csdef{input@path}{%
	{img/}
}%

\documentclass[preprint]{imsart}

\usepackage{times}
\usepackage{booktabs} 
\usepackage{amsthm}
\usepackage{amsmath}
\usepackage{physics}
\usepackage{amssymb}
\usepackage{amstext} 
\usepackage{array}   
\usepackage{enumitem} 
\setlist[itemize]{noitemsep} 
\allowdisplaybreaks
\usepackage{multirow}
\usepackage[flushleft]{threeparttable}
\usepackage{verbatim}
\usepackage{hyperref}
\usepackage{bbm}
\usepackage{longtable}
\usepackage[figuresright]{rotating}
\usepackage{caption}
\usepackage{graphicx}
\usepackage{subcaption}
\usepackage{indentfirst}

\usepackage{natbib}

\newcommand{\R}{\mathbb{R}}
\newtheorem{theorem}{Theorem}
\newtheorem{remark}{Remark}

\usepackage{calrsfs}
\DeclareMathAlphabet{\pazocal}{OMS}{zplm}{m}{n}
\newcommand{\Na}{\pazocal{N}}
\newcommand{\Wa}{\pazocal{W}}

\newcommand{\Xa}{\pazocal{X}}

\startlocaldefs
\numberwithin{equation}{section}
\theoremstyle{plain}

\endlocaldefs

\begin{document}

\begin{frontmatter}
\title{Bayesian Classification of Multiclass Functional Data}
\runtitle{Functional Data Classification}

\begin{aug}
	\author{\fnms{Xiuqi} \snm{Li}\ead[label=e1]{xli35@ncsu.edu}}
	\address{Operations Research Graduate Program \\
		North Carolina State University\\
		Raleigh, NC 27695\\
		U.S.A.\\
		\printead{e1}}
	\and
	\author{\fnms{Subhashis} \snm{Ghosal}\ead[label=e2]{sghosal@stat.ncsu.edu}}
	\address{Department of Statistics\\
		North Carolina State University\\
		Raleigh, NC 27695\\
		U.S.A.\\
		\printead{e2}}
	
	\runauthor{Li and Ghosal}
	
\end{aug}

\begin{abstract}
	We propose a Bayesian approach to estimating parameters in multiclass functional models. Unordered multinomial probit, ordered multinomial probit and multinomial logistic models are considered. We use finite random series priors based on a suitable basis such as B-splines in these three multinomial models, and classify the functional data using the Bayes rule. We average over models based on the marginal likelihood estimated from Markov Chain Monte Carlo (MCMC) output. Posterior contraction rates for the three multinomial models are computed. We also consider Bayesian linear and quadratic discriminant analyses on the multivariate data obtained by applying a functional principal component technique on the original functional data. A simulation study is conducted to compare these methods on different types of data. We also apply these methods to a phoneme dataset.  
\end{abstract}

\begin{keyword}
	\kwd{Multiclass functional data}
	\kwd{Multinomial probit models}
	\kwd{B-splines}
	\kwd{Posterior contraction rate}
	\kwd{Discriminant analysis}
\end{keyword}

\end{frontmatter}

\section{Introduction}

\par Functional data analysis (FDA) deals with the analysis of data occurring in the form of functions. \cite{WangJane-Ling} gave an overview of FDA including functional principal component analysis, functional linear regression, clustering and classification of functional data. FDA is increasingly drawing attention in many areas, such as biomedicine, environmental studies, and economics \citep{Ullah}. \cite{Mallor} proposed a model based on functional principal component analysis to predict household electricity consumption. \cite{Wagner-Muns} proposed a method that uses functional principal components analysis to forecast traffic volume. Classification of functional data, especially when the data units can come from more than two categories, is a fundamental problem of interest. Generalized linear models are often used to classify the functional data \citep{Muller,James2002}. The linear discriminant analysis is also used for functional data classification \citep{James2001}. \cite{Preda} proposed the partial least squares regression on functional data for linear discriminant analysis. \cite{Rossi} adapted support vector machines to functional data classification. \cite{Li2008} proposed a functional segment discriminant analysis (FSDA), which combines the classical linear discriminant analysis and support vector machines. Wavelets approaches are also applied to classify and cluster functional data \citep{Ray,Antoniadis,Chang,Suarez}.
There are also nonparametric approaches for functional data classification \citep{Biau,Ferraty}. However, there are only a few approaches proposed in the context of Bayesian classification for functional data. \cite{Wang2007} developed a Bayesian hierarchical model which combines the adaptive wavelet-based function estimation and the logistic classification. \cite{Zhu2010} proposed a Bayesian hierarchical model that takes into account random batch effects and selects effective functions among multiple functional predictors. \cite{Stingo} proposed a Bayesian conjugate normal discriminant model on the wavelet transform of the functional data. \cite{Zhu} introduced two Bayesian approaches: the Gaussian, wavelet-based functional mixed model and the robust, wavelet-based functional mixed model. 
\par In this paper, we consider a response $Y$ taking values $k=1,\dots,K$, with functional covariate $\{X(t), t\in [0,1]\}$. The main problem is to estimate the probability $\text{P}(Y=k|X)$, which can be conveniently modeled by a function of $\int \beta(t)X(t)dt$
\begin{align}
	\text{P}(Y=k|X)=H_k\left(\int \beta(t)X(t)dt\right),
\end{align}
where $H_k$ is a cumulative distribution function, and $\beta(\cdot)$ is an unknown (possibly vector of) coefficient function(s). Unordered multinomial probit, ordered multinomial probit and multinomial logistic models are considered in this paper which correspond to different choices of $H_k, k=1,\dots,K$. For an ordered multinomial probit model, there are additional order restrictions. Finite random series priors \citep{Shen} are applied to the three multinomial models. We compare these methods with Bayesian linear and quadratic discriminant analyses applied on the data reduced to multivariate form by a functional principal component technique. Following a Bayesian approach, the posterior distribution of the parameters are obtained using the training data, and then the classification rules are applied to the test data using the posterior probability of class membership.
\par  The primary goal of a basis expansion method is to reduce a more complex problem to a simpler problem which has either a known solution or is likely to be easier to solve. A prior on function through finite random series is a standard tool in nonparametric Bayesian inference, but in the context of functional data, the technique has not been utilized to its fullest potential, especially regarding the study of theoretical property of Bayesian methods. Only one paper \citep{Shen} has one example of functional linear regression treated using finite random series priors. We take that idea but develop it in the context of functional data classification. Characterizing contraction rate is a major goal of this paper. For this, we need to estimate the complexity of the model and the prior concentration. Even though, the model reduces to the finite dimensional setting from the computational point of view, the effect of the residual bias in the approximation of function must be properly addressed. Hence the treatment substantially different from that of a parametric problem. In particular, the dimension of the basis must be adapted with the smoothness and the sample size by using a prior on it.
\par The paper is organized as follow. In Section 2, the 
three functional multinomial models are described. Section 3 gives the description of applying the finite random series prior to these models. The marginal likelihood estimation is described in Section 4. In Section 5, the posterior contraction rates of the three functional multinomial models with finite random series prior are computed. Section 6 describes the Bayesian discriminant analysis of functional data, which is used to compare with the proposed models. In Section 7, a simulation study is conducted on various types of data. In Section 8, the three multinomial models and Bayesian discriminant analysis are tested on phoneme dataset.  

\section{Model}

\subsection{Ordered Multinomial Probit Model}
\par Let $X_{i}(t)$, $i=1,\dots,n$, $t\in[0,1]$, be the orbserved functional data associated with a categorical variable $Y_{i}$ taking possible values $1,\dots,K$. We assume that $(X_i,Y_i)$, $i=1,\dots,n$, are independent and identically distributed (i.i.d) observations.
\par Following \cite{chib1993}, we consider the model described implicitly as follows: there exists a latent variable $W_{i}$ distributed as $\text{N}(\int \beta(t)X_{i}(t)dt,1)$, for $i=1,\dots,n$, and that $Y_{i}=k$ if $\gamma_{k-1}<W_{i}\leq \gamma_{k}$, where $k=1,\dots,K$. The latent variables $W_i$, $i=1,\dots,n$, are independent. The coefficient function $\beta(\cdot)$ is unkown. The cut-points $\gamma_k$ are also unknown except that $\gamma_{0}=-\infty$ and $\gamma_{K}=\infty$. To ensure identifiability, we set $\gamma_{1}=0$. Under the assumed model, the probability of choosing a category $k$ is given by
\begin{align}
	\text{P}(Y_{i}=k|X_{i})=\Phi\left(\gamma_{k}-\int \beta(t)X_{i}(t)dt\right)-\Phi\left(\gamma_{k-1}-\int \beta(t)X_{i}(t)dt\right),
\end{align}
where $\Phi$ stands for the distribution function of the standard normal distribution.

\subsection{Unordered Multinomial Probit Model}
\par Let $X_i(t)$, $i=1,\dots,n$, be the same as in the Section 2.1, and also same for Section 2.3. 
\par The unordered multinomial probit model can be described by the following data augmentation method. As in \cite{chib1993}, let $W_{i}'=(W_{i1}',\dots,W_{iK}')^{T}$, $i=1,\dots,n$, be latent variable, such that $W_{il}'$ follows a linear model
\begin{align}
	W_{il}'=\int\beta_{l}'(t)X_{i}(t)dt+\varepsilon_{il}',
\end{align}
where $\varepsilon_{il}'\sim \text{N}(0,1)$, $i=1,\dots,n$, $l=1,\dots,K$, are i.i.d. standard normal random variables.
Consider the latent variables $W_{i}=(W_{i1},\dots,W_{iK-1})^{T}$,  $W_{il}=W_{il}'-W_{iK}'$, 
\begin{align}
	W_{il}=\int\beta_{l}'(t)X_{i}(t)dt-\int \beta_{K}'(t)X_{i}(t)dt+\varepsilon_{il},
\end{align}
where $\varepsilon_{il}=\varepsilon_{il}'-\varepsilon_{iK}'$, and $l=1,\dots,K-1$. Let $\varepsilon_{i}=(\varepsilon_{i1},\dots,\varepsilon_{iK-1})^{T}$. Then $\varepsilon_{i}$ follows $\text{N}(0,\Sigma)$, where $\Sigma$ is a $(K-1)\times(K-1)$ matrix with 2 at diagonal entries and 1 at all off-diagonal entries.
\par The probability of choosing the $k$th ($k=1,\dots,K-1$) alternative is given by
\begin{align}
	\text{P}(Y_{i}=k|X_{i})=\text{P}(W_{ik}>W_{il}, \:\text{for all}\:l\neq k, \text{and} \: W_{ik}>0),
\end{align}
and the probability of choosing alternative $K$ is given by
\begin{align}
	\text{P}(Y_{i}=K|X_{i})=\text{P}(W_{il}<0 \:\text{for all} \: l=1,\dots,K-1).
\end{align}

\subsection{Multinomial Logistic Model}
\par In this model, the probability of choosing category $k$ is given by
\begin{align}
	\text{P}(Y_{i}=k|X_{i})=\frac{\exp[\int \beta_{k}(t)X_{i}(t)dt]}{\sum_{l=1}^{K}\exp[\int \beta_{l}(t)X_{i}(t)dt]}.
\end{align}
\par To ensure model identification, set $\beta_{K}(t)=0$. Then the probability of choosing categoty $k$ ($k=1,\dots,K-1$) is given by
\begin{align}
	\text{P}(Y_{i}=k|X_{i})=\frac{\exp[\int \beta_{k}(t)X_{i}(t)dt]}{1+\sum_{l=1}^{K-1}\exp[\int \beta_{l}(t)X_{i}(t)dt]},
\end{align}
and the probability of choosing category $K$ is given by
\begin{align}
	\text{P}(Y_{i}=K|X_{i})=\frac{1}{1+\sum_{l=1}^{K-1}\exp[\int \beta_{l}(t)X_{i}(t)dt]}.
\end{align}

\section{Finite Random Series Prior}
\par The functional coefficient $\beta(t)$ (or $\beta_1(t),\dots,\beta_K(t)$ for unordered multinomial probit and multinomial logistic model ) is given a prior which is a finite linear combination of a certain chosen basis functions: $\beta(t)=\sum_{j=1}^{J} \theta_{j}\psi_{j}(t)$, where $\{\psi_{1}(t),\dots,\psi_{J}(t)\}$ is a basis, for example, formed by B-splines, Fourier functions, or wavelets. A prior is put on the unknown coefficients $(\theta_1,\dots,\theta_J)$. The number of basis function $J$ is also unknown and should be given a hyperprior. Instead of sampling across the different dimensions using reversible jump MCMC \citep{Green} which has computational difficulty for complicated models, we can implement MCMC for a given $J$ value, and repeat it for relevant $J$ values. Thus, we can compute the marginal likelihood $m(Y|J)$ for potentially interesting values of $J$, and obtain the posterior probability of $J$, which are discussed in Section 4.
\par The advantage of a using finite random series prior is that the inner product between the functional coefficient and the functional data $\int \beta(t)X_{i}(t)dt$ is reduced to a simple linear combination
\begin{align}
	\int \beta(t)X_{i}(t)dt=\int \sum_{j=1}^{J} \theta_{j}\psi_{j}(t)X_{i}(t)dt=\sum_{j=1}^{J} \theta_{j}Z_{ij},
\end{align}
where $Z_{ij}=\int \psi_{j}(t)X_{i}(t)dt$ is known, and can be computed by Simpson's rule.

\subsection{Ordered Multinomial Probit Model}
\par Using a finite random series $\beta(t)=\sum_{j=1}^{J}\theta_{j}\psi_{j}(t)$, the model in (2.1) can be rewritten as
\begin{align}
	\text{P}(Y_{i}=k|X_{i})=\Phi\left(\gamma_{k}-\sum_{j=1}^{J} \theta_{j}Z_{ij}\right)-\Phi\left(\gamma_{k-1}-\sum_{j=1}^{J} \theta_{j}Z_{ij}\right),
\end{align}
where $Z_{ij}=\int \psi_{j}(t)X_{i}(t)dt$.
\par Define $\theta=(\theta_{1},\dots,\theta_{J})^{T}$, and $Z_{i}=(Z_{i1},\dots,Z_{iJ})^{T}$. Then (3.2) can be written compactly as
\begin{align}
	\text{P}(Y_{i}=k|X_{i})=\Phi(\gamma_{k}-Z_{i}^{T}\theta)-\Phi(\gamma_{k-1}-Z_{i}^{T}\theta).
\end{align}
Clearly the unobserved latent variable $W_{i}$ follows $\text{N}_J(Z_{i}^{T}\theta,1)$, where $\text{N}_J$ stands for the $J$-variate normal distribution. Assign a conjugate prior $\theta\sim\text{N}_J(\theta_{0},B_{0})$, where $\theta_0$ is $J\times 1$ mean vector, and $B_{0}$ is a $J\times J$ covariance matrix. Then the posterior distribution of $\theta$ is given by
\begin{align}
	\theta|Y,W\sim \text{N}_J(\theta_{n},B_{n}),\: B_{n}=(B_{0}^{-1}+Z^{T}Z)^{-1},\: \theta_{n}=B_{n}(B_{0}^{-1}\theta_{0}+Z^{T}W),
\end{align}
where $Z=(Z_{1}^{T},\dots,Z_{n}^{T})^{T}$, and $W=(W_{1},\dots,W_{n})^{T}$.
\par We follow the scheme introduced by \cite{chib1993}. The posterior distribution of $W_{i}$ is given by
\begin{align}
	W_{i}|(\theta,\gamma,Y_{i}=k)\sim \text{TN}(Z_{i}^{T}\theta,1,\gamma_{k-1},\gamma_{k}),
\end{align}
where $\text{TN}(Z_{i}^{T}\theta,1,\gamma_{k-1},\gamma_{k})$ is the truncation of the (univariate) normal distribution with mean $Z_{i}^{T}\theta$, and variance 1 to the interval $(\gamma_{k-1}, \gamma_{k})$.
\par \cite{chib1993} assigned a diffuse prior on the cut-points. However, model averaging needs a proper prior. A normal prior is not appropriate due to the order restriction on $\gamma_1,\dots,\gamma_K$. \cite{chib1997} proposed a transformation of $\gamma=(\gamma_1,\dots,\gamma_K)$ which avoids the order restriction.
\begin{align}
	\alpha_1=\log\gamma_2, \: \alpha_j=\log(\gamma_{j+1}-\gamma_j), \: 2\leq j \leq K-2.
\end{align}
\par Note that $\gamma_1=0$ and by the inverse map
\begin{align}
	\gamma_j=\sum_{l=1}^{j-1}e^{\alpha_l},\: 2\leq j \leq K-1.
\end{align}
Then $\gamma$ can be reparameterized by $\alpha=(\alpha_1,\dots,\alpha_{K-2})$. Assign a multivariate normal prior with mean $\alpha_0$, and covariance $A_0$ on $\alpha$. To sample $\gamma$, apply the following steps of Metropolis-Hastings algorithm.
\begin{enumerate}
	\item Sample $\alpha'$ from a proposal distribution $q(\alpha',\alpha| Y,\theta,W)$. Here we allow the proposal density to depend on the data and the two remaining blocks for the convenience of computing the marginal likelihood in the future.
	\item Move to $\alpha'$ from the current $\alpha$ with probability
	\begin{align}
		\rho(\alpha, \alpha'| Y,\theta,W)=\text{min}\Bigl\{\frac{f(Y|\alpha',\theta,W)\pi(\alpha',\theta)}{f(Y|\alpha,\theta,W)\pi(\alpha,\theta)}\frac{q(\alpha',\alpha| Y,\theta,W)}{q(\alpha,\alpha'| Y,\theta,W)}, 1 \Bigr\}.
	\end{align}
	\item Compute $\gamma$ by the inverse map (3.7).
\end{enumerate}
\par To implement the MCMC sampling, first draw $\gamma$ by the above steps. Then sample from the posterior distributions (3.5) and (3.4).
\par The values of $\gamma$ sampled from the Metropolis-Hastings algorithm converges quickly. We demonstrate it on the real data in Section 8 by plotting the sampling values of $\gamma$.

\subsection{Unordered Multinomial Probit Model}
\par Let $\beta_{l}'(t)=\sum_{j=1}^{J} \theta_{lj}'\psi_{j}(t)$, where $l=1,\dots,K$. Then (2.3) can be rewritten as
\begin{align}
	W_{il}=\sum_{j=1}^{J}\theta_{lj}'Z_{ij}-\sum_{j=1}^{J}\theta_{Kj}'Z_{ij}+\epsilon_{il}=\sum_{l=1}^{J}(\theta_{jl}'-\theta_{jK}')Z_{ij}+\varepsilon_{il},
\end{align}
where $Z_{ij}=\int \psi_{j}(t)X_{i}(t)dt$.
\par Let $\theta_{lj}=\theta_{lj}'-\theta_{Kj}'$, where $j=1,\dots,J$. Define $\theta_{l}=(\theta_{l1},\dots,\theta_{lJ})^{T}$, and $Z_{i}=(Z_{i1},\dots,Z_{iJ})^{T}$. Then (3.9) is given by
\begin{align}
	W_{il}=Z_{i}^{T}\theta_{l}+\varepsilon_{il},
\end{align}
where $i=1,\dots,n$, $l=1,\dots,K-1$.
\par Define a $J\times(K-1)$ matrix $\Theta=(\theta_{1},\dots,\theta_{K-1})$. Then we have $W_{i}=Z_{i}^{T}\Theta+\varepsilon_{i}$, where $W_{i}=(W_{i1},\dots,W_{iK-1})^{T}$, $\varepsilon_{i}=(\varepsilon_{i1},\dots,\varepsilon_{iK-1})^{T}$, and $\varepsilon_{i}\sim \text{N}(0,\Sigma)$.
\par In the model described in Section 2, $\Sigma$ is known with 2 on diagonal entries and 1 on all off-diagonal entries. The only  parameter needs to be estimated is $\Theta$. In order to draw the matrix $\Theta$ using the Gibbs sampling, we can stack the data in a matrix form which is given by
\begin{align}
	W=Z\Theta+\varepsilon,
\end{align}
where $W=(W_{1}^{T},\dots,W_{n}^{T})^{T}$ is an $n\times(K-1)$ matrix, $Z=(Z_{1}^{T},\dots,Z_{n}^{T})^{T}$ is an $n\times J$ matrix, and $\varepsilon=(\varepsilon_{1}^{T},\dots,\varepsilon_{n}^{T})^{T}$ is an $n\times (K-1)$ matrix.
\par This results in a matrix normal distribution. The density function of matrix normal distribution $\text{MN}_{n\times p}(M,U,V)$ is
\begin{align}
	(2\pi)^{-np/2}|V|^{-n/2}|U|^{-p/2}\exp\left(-\frac{1}{2}\text{tr}[V^{-1}(X-M)^{T}U^{-1}(X-M)]\right),
\end{align}
where $M$ is an $n\times p$ mean matrix, $U$ is an $n\times n$ row variance matrix, $V$ is a $p\times p$ column variance matrix, $\text{tr}$ stands for the trace of a matrix, and $|U|$ and $|V|$ denote the determinants of $U$ and $V$ respectively.
\par Thus $W|\Theta \sim \text{MN}_{n\times (K-1)}(Z\Theta,I_{n},\Sigma)$. Here the row variance-covariance matrix $I_{n}$ is an identity matrix of rank $n$, since $W_{1},\dots,W_{n}$ are independent. 
We consider the matrix normal prior $\Theta\sim \text{MN}_{J\times(K-1)}(U_{0},V_{0},\Sigma)$. By a standard conjugacy calculation, the posterior is given by
\begin{align}
	\label{eqn:eqlabel}
	\begin{split}
		&\Theta|Y,W\sim \text{MN}_{J\times(K-1)}(U_{n},V_{n},\Sigma),\\ &V_{n}=(Z^{T}Z+V_{0}^{-1})^{-1},\: U_{n}=V_{n}(Z^{T}W+V_{0}^{-1}U_{0}).
	\end{split}
\end{align}
\par To draw a sample of $W$, we use the method introduced by \cite{McCulloch}. Let $W_{i,-l}$ denote  $(W_{i1},\dots,W_{i,l-1},W_{i,l+1},\dots,W_{iK-1})^{T}$, $Z_{i,\cdot}$ denote the $i$th row of $Z$, the vector $\Theta_{\cdot,l}$ denote the $l$th column of $\Theta$, the matrix $\Theta_{\cdot,-l}$ denote $\Theta$ without the $l$th column, the scalar $\Sigma_{l,l}$ denote the $(l,l)$th entry of $\Sigma$, $\Sigma_{-l,-l}$ denote $\Sigma$ without the $l$th row and the $l$th column, $\Sigma_{-l,l}$ denote the $l$th column of $\Sigma$ without the $l$th entry, and $\Sigma_{l,-l}$ denote the $l$th row of $\Sigma$ without the $l$th entry. We draw $W_{il}$ from the conditional truncation of the normal distribution with the mean $m_{il}$ and variance $\tau_{il}^2$ to the interval $(a,b)$ described below:
\begin{align}
	\label{eqn:eqlabel}
	\begin{split}
		&W_{il}|(W_{i,-l},\Theta,Y_{i})\sim \text{TN}(m_{il},\tau_{il}^2,a,b),\\
		&m_{il}=Z_{i,\cdot}\Theta_{\cdot, l}+\Sigma_{-l,l}^{T}\Sigma_{-l,-l}^{-1}(W_{i,-l}-Z_{i, \cdot}\Theta_{\cdot,-l}),\\
		&\tau_{il}^{2}=\Sigma_{l,l}-\Sigma_{l,-l}\Sigma_{-l,-l}^{-1}\Sigma_{-l,l},\\
		&(a,b)=
		\begin{cases}
			(\text{max} \{W_{i,-l},0\},\infty), & \text{if}\ Y_{i}=l, \ l=1,\dots,K-1, \\
			(-\infty,\text{max}\{W_{i,-l}\}), & \text{if}\ Y_{i}\neq l, \ l=1,\dots,K-1,\\
			(-\infty,0), & \text{if}\ Y_{i}=K,
		\end{cases}\\
		&i=1,\dots,n, \: l=1,\dots,K-1. 
	\end{split}
\end{align}
\par To implement the Gibbs sampling, we draw samples from (3.13) and (3.14).

\subsection{Multinomial Logistic Model}
Let $\beta_{k}(t)=\sum_{j=1}^{J}\theta_{kj}\psi_{j}(t)$. Then (2.7) and (2.8) can be rewritten as
\begin{align}
	\text{P}(Y_{i}=k|X_{i})=\frac{\exp[\sum_{j=1}^{J} \theta_{kj}Z_{ij}]}{1+\sum_{l=1}^{K-1}\exp[\sum_{j=1}^{J} \theta_{lj}Z_{ij}]},\: k=1,\dots,K-1,
\end{align}
\begin{align}
	\text{P}(Y_{i}=K|X_{i})=\frac{1}{1+\sum_{l=1}^{K-1}\exp[\sum_{j=1}^{J} \theta_{lj}Z_{ij}]},
\end{align}
where $Z_{ij}=\int \psi_{j}(t)X_{i}(t)dt$.
\par Define $\theta_{k}=(\theta_{k1},\dots,\theta_{kJ})^{T}$,$\:k=1,\dots,K-1$, and $Z_{i}=(Z_{i1},\dots,Z_{iJ})^{T}$. Then (3.15) and (3.16) are given by
\begin{align}
	\text{P}(Y_{i}=k|X_{i})=\frac{\exp[Z_{i}^{T}\theta_{k}]}{1+\sum_{l=1}^{K-1}\exp[Z_{i}^{T}\theta_{l}]}, \: k=1,\dots,K-1,
\end{align}
\begin{align}
	\text{P}(Y_{i}=K|X_{i})=\frac{1}{1+\sum_{l=1}^{K-1}\exp[Z_{i}^{T}\theta_{l}]}.
\end{align}
\par For each $\theta_{k}$, $k=1,\dots,K-1$, we assign a multivariate normal prior $\text{N}_J(\mu_{k},\Sigma_{k})$, and apply Metropolis-Hastings algorithm to sample $\theta_{k}$.
\begin{enumerate}
	\item Sample $\theta_k'$ from the proposal distribution $q(\theta_k',\theta_k| Y,\theta_{-k})$.
	\item Move to $\theta_k'$ from the current $\theta_k$ with probability
	\begin{align}
		\rho(\theta_k, \theta_k'| Y,\theta_{-k})=\text{min}\Bigl\{\frac{f(Y|\theta_k',\theta_{-k})\pi(\theta_k',\theta_{-k})}{f(Y|\theta_k,\theta_{-k})\pi(\theta_k,\theta_{-k})}\frac{q(\theta_k',\theta_k| Y,\theta_{-k})}{q(\theta_k,\theta_k'| Y,\theta_{-k})}, 1 \Bigr\},
	\end{align}
\end{enumerate}
where $\theta_{-k}$ denotes all the blocks except the $k$th one.

\section{Marginal Likelihood and Model Averaging}
\par In Section 3, we described the MCMC sampling technique for a given $J$ value, which we need to repeat it for all possible $J$ values. In the actual computation, however, it is impossible to consider all values of $J$. With a given prior on $J$, for example, geometric or Poisson distribution, the posterior probabilities for very small or very large values of $J$ decay to zero very quickly. Thus, we do not need to consider these $J$ values. Let $J_1,\dots,J_S$ denote the values of $J$ we need to consider. If we can get the marginal likelihood $m(Y|J_s)$, then we can compute the posterior probability of $J_s$ using Bayes's rule
\begin{align}
	\text{P}(J_s|Y)=\frac{m(Y|J_s)p(J_s)}{\sum_{l=1}^{S}m(Y|J_l)p(J_l)},
\end{align}
where $p(J_s)$, $s=1,\dots,S$, is the prior probability for $J=J_s$.
\par For each given $J_s$, we have a misclassification rate $r_s$, which is defined as the ratio of the number of falsely classified data to the total number of data. Then we can obtain the average misclassification rate $\bar{r}$ for each multinomial model:
\begin{align}
	\bar{r}=\sum_{s=1}^{S}\text{P}(J_s|Y)\cdot r_s.
\end{align}
We call it the model averaging method.
\par The marginal likelihood can be written as the normalizing constant of the posterior density
\begin{align}
	m(Y|J_s)=\frac{f(Y|J_s,B)\pi(B|J_s)}{\pi(B|Y,J_s)},
\end{align}
where $B$ is a convenient value of the parameter in the context of the support of the posterior distribution such as the posterior mean, because (4.3) holds for any $B$. The numerator is the product of the likelihood and the prior. The denominator is the posterior density of $B$. For a given $B^*$, the posterior density $\pi(B^*|Y, J_m)$ can be estimated from the Gibbs output \citep{Chib1995} and the Metropolis-Hasting output \citep{Chib2001}. Then the estimated marginal likelihood in the logarithm scale is
\begin{align}
	\log \hat{m}(Y|J_s)=\log f(Y|J_s,B^*)+\log \pi(B^*|J_s)-\log \hat{\pi}(B^*|Y,J_s).
\end{align}
The following sections give the details for $\pi(B^*|Y,J_s)$ estimation.
\subsection{Ordered Multinomial Probit Model}
\par There are two parameter blocks in this model, $\theta$ and $\alpha$, where $\alpha$ is the transformation of $\gamma$ as in (3.6). Given $\theta^*=G^{-1}\sum_{g=1}^{G}\theta^{(g)}$, and $\alpha^*=G^{-1}\sum_{g=1}^{G}\alpha^{(g)}$, where $\{\theta^{(g)},\alpha^{(g)}\}_{g=1}^{G}$ are from the MCMC output, the joint posterior density can be written as
\begin{align}
	\pi(\theta^*,\alpha^*|Y,J_s)=\pi(\alpha^*|Y,J_s)\pi(\theta^*|Y,J_s,\alpha^*),
\end{align}
where 
\begin{align}
	\pi(\theta^*|Y,J_s,\alpha^*)=\int \pi(\theta^*|Y,J_s,\alpha^*,W)\pi(W|Y,J_s,\alpha^*)dW.
\end{align}
\par The Monte Carlo estimate of $\pi(\theta^*|Y,J_s,\alpha^*)$ is
\begin{align}
	\hat{\pi}(\theta^*|Y,J_s,\alpha^*)=M^{-1}\sum_{m=1}^{M}\pi(\theta^*|Y,J_s,\alpha^*,W^{(m)}),
\end{align}
where $\{W^{(m)}\}_{m=1}^{M}$ are sampled from distribution $[W|Y,J_s,\alpha^*]$. The draws of $W$ from the Gibbs sampler are from the distribution $[W|Y,J_s]$, so $\pi(\theta^*|Y,J_s,\alpha^*,W)$ cannot be averaged directly by the Gibbs sampling output. Addtional sampling for $W$ is needed. We sample $\{\theta^{(m)}\}$ from the density  $\pi(\theta|Y,J_s,\alpha^*,W)$, and given that $\theta^{(m)}$, we sample $\{W^{(m)}\}$ from $\pi(W|Y,J_s,\theta,\alpha^*)$. 
\par The explicit distribution of $\alpha^*$ given $(Y,J_s)$ is unknown, and hence the draws of $\alpha$ are obtained from a Metropolis-Hastings sampling. By the local reversibility condition (see \cite{Chib2001} for details), the posterior density of $\alpha$ can be written as
\begin{align}
	\pi(\alpha^*|Y,J_s)=\frac{\text{E}_1\{\rho(\alpha,\alpha^*|Y,J_s,\theta,W)q(\alpha,\alpha^*|Y,J_s,\theta,W)\}}{\text{E}_2\{\rho(\alpha^*,\alpha|Y,J_s,\theta,W)\}},
\end{align}
where $\rho(\alpha,\alpha^*|Y,J_s,\theta,W)$ is defined in (3.8), $q(\alpha,\alpha^*|Y,J_s,\theta,W)$ is the proposal density, the expectation $\text{E}_1$ is with respect to the distribution $\pi(\theta,\alpha,W|Y,J_s)$, and $\text{E}_2$ is with respect to the distribution $\pi(\theta,W|Y,J_s,\alpha^*)\times q(\alpha^*,\alpha|Y,J_s,\theta,W)$.
\par Then an estimate of $\pi(\alpha^*|Y,J_s)$ is given by
\begin{align}
	\frac{G^{-1}\sum_{g=1}^{G}\rho(\alpha^{(g)},\alpha^*|Y,J_s,\theta^{(g)},W^{(g)})q(\alpha^{(g)},\alpha^*|Y,J_s,\theta^{(g)},W^{(g)})}{M^{-1}\sum_{m=1}^{M}\rho(\alpha^*,\alpha^{(m)}|Y,J_s,\theta^{(m)},W^{(m)})},
\end{align}
where $\{\theta^{(g)},\alpha^{(g)},W^{(g)}\}_{g=1}^{G}$ are obtained from the MCMC output. $\{\theta^{(m)},W^{(m)}\}$ are obtained from $\pi(\theta|Y,J_s,\alpha^*,W)$ and $\pi(W|Y,J_s,\theta,\alpha^*)$, and then given $\{\theta^{(m)},W^{(m)}\}$, sample $\alpha^{(m)}$ from $q(\alpha^*, \alpha|Y,J_s,\theta^{(m)},W^{(m)})$.

\subsection{Unordered Multinomial Probit Model}
\par The only unknown parameter is $\Theta$. For $\Theta^*=G^{-1}\sum_{g=1}^{G}\Theta^{(g)}$, where $\{\Theta^{(g)}\}$ are from the Gibbs sampling output, the posterior density of $\Theta$ at $\Theta^*$ can be written as
\begin{align}
	\pi(\Theta^*|Y,J_s)=\int \pi(\Theta^*|Y,J_s,W)\pi(W|Y,J_s)dW.
\end{align}
Then the Monte Carlo estimate of $\pi(\Theta^*|Y,J_s)$ is
\begin{align}
	\hat{\pi}(\Theta^*|Y,J_s)=\sum_{g=1}^{G} \pi(\Theta^*|Y,J_s,W^{(g)}),
\end{align}
where $\{W^{(g)}\}_{g=1}^{G}$ are from the Gibbs sampling output.
\par For the unordered multinomial probit model, we also need to estimate the likelihood at some convenient values in the support of the posterior distribution. From Section 3.2, $\Theta=(\theta_1,\dots,\theta_{K-1})$, where $\theta_l=\theta_l'-\theta_K'$, $l=1,\dots,K-1$. Then (2.5) can be rewritten as
\begin{align}
	\label{eqn:eqlabel}
	\begin{split}
		&\text{P}(Y=K)\\
		&=\frac{1}{(2\pi)^{(K-1)/2}|\Sigma|^{1/2}}\int_{-\infty}^{-Z^T\Theta_{\cdot,1}}\dots\int_{-\infty}^{-Z^T\Theta_{\cdot,K-1}}\exp\big(-\frac{1}{2}U^T\Sigma^{-1}U\big)dU,
	\end{split}
\end{align}
where $\Theta_{\cdot,l}$ denotes the $l$th column of $\Theta$.
\par For $l\neq K$, let $\Theta^l=(\theta_1-\theta_l,\dots,\theta_{l-1}-\theta_l,\theta_{l+1}-\theta_l,\dots,\theta_{K-1}-\theta_l,-\theta_l)$, then
\begin{align}
	\label{eqn:eqlabel}
	\begin{split}
		&\text{P}(Y=l)\\
		&=\frac{1}{(2\pi)^{(K-1)/2}|\Sigma|^{1/2}}\int_{-\infty}^{-Z^T\Theta_{\cdot,1}^l}\dots\int_{-\infty}^{-Z^T\Theta_{\cdot,K-1}^l}\exp\big(-\frac{1}{2}U^T\Sigma^{-1}U\big)dU.
	\end{split}
\end{align}
\par Due to the exchangeable correlation structure of $\Sigma$, (4.13)  can be reduced to a one dimensional integral \citep{Dunnett} given by
\begin{align}
	\label{eqn:eqlabel}
	\begin{split}
		&\text{P}(Y=l)\\
		&=\frac{1}{\sqrt{\pi}}\int_{0}^{\infty}\bigl\{\prod_{k=1}^{K-1}\Phi(-u\sqrt{2}-Z^T\Theta_{\cdot,k}^{l})+\prod_{k=1}^{K-1}\Phi(u\sqrt{2}-Z^T\Theta_{\cdot,k}^{l})\bigr\}e^{-u^2}du.
	\end{split}
\end{align}
\par The expression in (4.12) can also be reduced to the same form as in (4.14). Then (4.14) can be approximated by Gaussian quadrature as follow
\begin{align}
	\text{P}(Y=l)\approx\frac{1}{2}w_q\bigl\{\prod_{k=1}^{K-1}\Phi(-\sqrt{2x_q}-Z^T\Theta_{\cdot,k}^{l})+\prod_{k=1}^{K-1}\Phi(\sqrt{2x_q}-Z^T\Theta_{\cdot,k}^{l})\bigr\},
\end{align}
where $w_q$ and $x_q$ are the weights and roots of the Laguerre polynomial of order $Q$.
\par Thus, the likelihood of this unordered multinomial probit model can be approxiamted using (4.15).

\subsection{Multinomial Logistic Model}
\par There are $K-1$ unknown parameters: $\theta_1,\dots,\theta_{K-1}$. Given $\theta_k^*=G^{-1}\sum_{g=1}^{G}\theta_k^{(g)}$, $k=1,\dots,K-1$, where $\{\theta_k^{(g)}\}_{g=1}^{G}$ are from the Metropolis-Hastings sampling output, the joint posterior density can be written as
\begin{align}
	\pi(\theta_1^*,\dots,\theta_{K-1}^*|Y,J_s)=\prod_{i=1}^{K-1}\pi(\theta_i|Y,J_s,\theta_1^*,\dots,\theta_{i-1}^*).
\end{align}
\par By the local reversibility, each full conditional density can be written as
\begin{align}
	\label{eqn:eqlabel}
	\begin{split}
		&\pi(\theta_i|Y,J_s,\theta_1^*,\dots,\theta_{i-1}^*)\\
		&=\frac{\text{E}_1\{\rho(\theta_i,\theta_i^*|Y,J_s,\Psi_{i-1}^*,\Psi^{i+1})q(\theta_i,\theta_i^*|Y,J_s,\Psi_{i-1}^*,\Psi^{i+1})\}}{\text{E}_2\{\rho(\theta_i^*,\theta_i|Y,J_s,\Psi_{i-1}^*,\Psi^{i+1})\}},
	\end{split}
\end{align}
where $\Psi_{i-1}=(\theta_1,\dots,\theta_{i-1})$, $\Psi^{i+1}=(\theta_{i+1},\dots,\theta_{K-1})$, $\rho(\theta_i,\theta_i^*|Y,J_s,\Psi_{i-1}^*,\Psi^{i+1})$ is defined in (3.19), $q(\theta_i,\theta_i^*|Y,J_s,\Psi_{i-1}^*,\Psi^{i+1})$ is the proposal density, $\text{E}_1$ is the expectaion with respect to the distribution $\pi(\theta_i,\Psi^{i+1}|Y,J_s,\Psi_{i-1}^*)$, and $\text{E}_2$ is that with respect to $\pi(\Psi^{i+1}|Y,J_s,\Psi_{i-1}^*,\theta_i^*)\times q(\theta_i^*,\theta_i|Y,J_s,\Psi_{i-1}^*,\Psi^{i+1})$.
\par Then an estimate of $\pi(\theta_i|Y,J_s,\theta_1^*,\dots,\theta_{i-1}^*)$ is given by
\begin{align}
	\label{eqn:eqlabel}
	\begin{split}
		&\hat{\pi}(\theta_i|Y,J_s,\theta_1^*,\dots,\theta_{i-1}^*)\\
		&=\frac{G^{-1}\sum_{g=1}^{G}\rho(\theta_i^{(g)},\theta_i^*|Y,J_s,\Psi_{i-1}^*,\Psi^{i+1,(g)})q(\theta_i^{(g)},\theta_i^*|Y,J_s,\Psi_{i-1}^*,\Psi^{i+1,(g)})}{M^{-1}\sum_{m=1}^{M}\rho(\theta_i^*,\theta_i^{(m)}|Y,J_s,\Psi_{i-1}^*,\Psi^{i+1,(m)})},
	\end{split}
\end{align}
where $\{\theta_i^{(g)},\Psi^{i+1,(g)}\}_{g=1}^{G}$ are obtained from $\pi(\theta_i,\Psi^{i+1}|Y,J_s,\Psi_{i-1}^*)$. $\{\Psi^{i+1,(m)}\}$ are obtained from  $\pi(\Psi^{i+1}|Y,J_s,\Psi_{i-1}^*,\theta_i^*)$, and then for each $\{\Psi^{i+1,(m)}\}$, sample $\theta_i^{(m)}$ from $q(\theta_i^*, \theta_i|Y,J_s,\Psi^{i+1,(m)})$.

\section{Posterior Contraction Rate}
\par For classification problem, the most important object to study is the misclassification rate. By examining convergence to the true distribution, it follows that the Bayes procedure has misclassification rate close to that of the oracle procedure which uses the true values of the regression functions and other parameters (if any), e.g., cut-points in the ordered multinomial probit model. In the Bayesian nonparametric setting, Hellinger convergence is established by applying the general theory \citep{Ghosal}. Thus, in this section, we only consider the contraction rate of the posterior distribution with respect to a metric on the probability of categories, which is equivalent with Hellinger distance on the joint distribution. The posterior contraction rates of the three multinomial models with finite random series prior can be obtained using calculation similar to those in \cite{Shen} on posterior contraction rates for finite random series. 
\par \sloppy We use $\lesssim$ to denote an inequality up to a constant multiple, $f\asymp g$ for $f\lesssim g\lesssim f$.  For a vector $\theta \in \R^d, \| \theta \| _{p}=\{\sum_{i=1}^{d}|\theta_{i}|^p\}^{1/p}$, where $1\leq p<\infty$, and $\| \theta\|_{\infty}=\text{max}_{1\leq i \leq d}|\theta_{i}|$. Similarly, for a function $f$ with respect to a measure $G$, we define $\|f\|_{p,G}=\{\int |f(x)|^pdG\}^{1/p}$, where $1\leq p<\infty$, and $\|f\|_{\infty,G}= \text{sup}_x|f(x)|$. Let $\Na(\epsilon,T,d)$ denote the $\epsilon$-covering number of a set $T$ for a metric $d$. Let $h^2(p,q)=\int (\sqrt{p}-\sqrt{q})^2d\mu$ be the squared Hellinger distance, $K(p,q)=\int p\log(p/q)d\mu$, $V(p,q)=\int p\log^2(p/q)d\mu$ be the Kullback-Leibler (KL) divergences.
\par Suppose that $(X_i,Y_i)$, $i=1,\dots,n$, are the independent observations. Let $p$ denote the joint probability of $(X,Y)$, where $Y$ takes values $1,\dots,K$, and $p_0$ denote the true joint probabilty. Let $(X^{(n)},Y^{(n)})$ be the vector of $n$ obeservations following the probability $p^{(n)}$. Let $\pi_{k}(X)=\text{P}(Y=k|X)$ be the probability of the $k$th category conditioned on $X$, and $\pi_{0k}$ be the true probablity of the $k$th category conditioned on $X$. Define the probability vector $\pi=(\pi_{1},\dots,\pi_{K})^T$, where $\pi_{K}=1-\sum_{k=1}^{K-1}\pi_{k}$, and $\pi_{0}=(\pi_{01},\dots,\pi_{0K})^T$, where $\pi_{0K}=1-\sum_{k=1}^{K-1}\pi_{0k}$. Assume that the joint distribution of $(X,Y)$ follows $\nu\times G$, where $\nu$ denotes the counting measure on $\{1,\dots,K\}$. For these multinomial models, the KL divergences $K(p_0,p)$, and $V(p_0,p)$ can be reduced to 
\begin{align}
\label{eqn:eqlabel}
\begin{split}
K(p_0,p)&=\int\int p_0(x,y)\log\frac{p_0(x,y)}{p(x,y)}\dd{\nu(y)}\dd{x}\\
&=\int\int{\pi_0}(y|x)\log\frac{{\pi_0}(y|x)}{{\pi}(y|x)}\dd{\nu(y)}\dd{G(x)}\\
&=\text{E}_{X}\Bigl\{\sum_{k=1}^{K}\pi_{0k}(X)\log\frac{\pi_{0k}(X)}{\pi_{k}(X)}\Bigr\}\\
&=K(\pi_0,\pi), \: \text{say}
\end{split}
\end{align}
\begin{align}
\label{eqn:eqlabel}
\begin{split}
V(p_0,p)&=\int\int p_0(x,y)\log^2\frac{p_0(x,y)}{p(x,y)}\dd{\nu(y)}\dd{x}\\
&=\int\int{\pi_0}(y|x)\log^2\frac{{\pi_0}(y|x)}{{\pi}(y|x)}\dd{\nu(y)}\dd{G(x)}\\
&=\text{E}_{X}\Bigl\{\sum_{k=1}^{K}\pi_{0k}(X)\log^2\frac{\pi_{0k}(X)}{\pi_{k}(X)}\Bigr\}\\
&=V(\pi_0,\pi), \: \text{say}.
\end{split}
\end{align}
Similarly, the squared Hellinger distance $h^2(p_1,p_2)$ can be reduced to 
\begin{align}
\label{eqn:eqlabel}
\begin{split}
h^2(p_1,p_2)&=\int\int \big(\sqrt{{\pi_1}(y|x)}-\sqrt{{\pi_2}(y|x)} \big)\dd{\nu(y)}\dd{G(x)}\\
&=\text{E}_{X}\Bigl\{\sum_{k=1}^{K}\big(\sqrt{\pi_{1k}(X)}-\sqrt{\pi_{2k}(X)}\big)^2\Bigr\}\\
&=h^2(\pi_1,\pi_2), \: \text{say}.
\end{split}
\end{align}
\par We define a metric by
\begin{align}
	d(\pi,\pi_0)=\sqrt{\sum^{K}_{k=1}\text{E}_X|\pi_{k}(X)-\pi_{0k}(X)|^2}.
\end{align}
Then we have the following slightly simplification of a general posterior contraction theorem suitable in our context.
\begin{theorem}
	Assume that $\pi_0$ is bounded away from zero. Let $\epsilon_n\geq \bar{\epsilon}_n$ be two sequences of positive numbers satisfying $\epsilon_n\rightarrow 0$ and $n\bar{\epsilon}^2_n\rightarrow \infty$. Let $\Xa_0$ be such that $\text{P}(X\in\Xa_0)=1$ and $\pi_k(x),k=1,\dots,K$ for $x\in\Xa_0$ is bounded away from $0$. Let $\Wa_n$ be a subset of the parameter space such that the following conditions hold for some positive constants $a_2$ and $a_1>a_2+2$:
	\begin{align}
	&\log\Na(\epsilon_n, \Wa_n, h)\lesssim n\epsilon^2_n,\\
	&\Pi(\pi\not\in\Wa_n)\leq \exp\{-a_1n\bar{\epsilon}^2_n\},\\
	&-\log\Pi\left(\sum_{k=1}^{K}\|\pi_k-\pi_{0k}\|_{\infty,\Xa_0}^2\leq \bar{\epsilon}^2_n\right)\le a_2 n\bar{\epsilon}^2_n,
	\end{align}
	where $\|\pi_k-\pi_{0k}\|_{\infty,\Xa_0}=\sup_{x\in\Xa_0}\left|\pi_k(x)-\pi_{0k}(x)\right|$. Then for every $M_n\rightarrow \infty$, we have $\Pi\left(d(\pi,\pi_0)\geq M_n\epsilon_n|X^{(n)},Y^{(n)}\right)\rightarrow 0$ in probability.
\end{theorem}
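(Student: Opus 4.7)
The plan is to deduce the theorem from the general posterior contraction theorem for i.i.d. observations by verifying its three hypotheses and then converting the resulting Hellinger-type conclusion to one in $d$. I would first reduce to a statement in Hellinger distance: since every component of both probability vectors lies in $[0,1]$, for each $x$ and $k$ we have $\sqrt{\pi_k(x)}+\sqrt{\pi_{0k}(x)}\leq 2$, and the identity $(\sqrt{\pi_k}-\sqrt{\pi_{0k}})^2=(\pi_k-\pi_{0k})^2/(\sqrt{\pi_k}+\sqrt{\pi_{0k}})^2$ yields $(\pi_k-\pi_{0k})^2\leq 4(\sqrt{\pi_k}-\sqrt{\pi_{0k}})^2$. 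Summing over $k$ and taking $\text{E}_X$ gives $d^2(\pi,\pi_0)\leq 4h^2(\pi,\pi_0)$, so $\{d(\pi,\pi_0)\geq M_n\epsilon_n\}\subseteq\{h(\pi,\pi_0)\geq M_n\epsilon_n/2\}$ and it suffices to prove posterior contraction in $h$ at rate $\epsilon_n$.

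The general theorem requires three ingredients: (i) the entropy bound $\log\Na(\epsilon_n,\Wa_n,h)\lesssim n\epsilon_n^2$, supplied by (5.5); (ii) exponentially small sieve complement mass $\Pi(\pi\notin\Wa_n)\leq\exp\{-a_1n\bar{\epsilon}_n^2\}$ with $a_1$ exceeding the KL-concentration exponent by a universal slack, supplied by (5.6) (the requirement $a_1>a_2+2$ furnishes exactly the slack needed in the standard proof of the general theorem); and (iii) a KL-type prior mass bound $\Pi(K(p_0,p)\leq\bar{\epsilon}_n^2,V(p_0,p)\leq\bar{\epsilon}_n^2)\geq\exp\{-a_2n\bar{\epsilon}_n^2\}$. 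Only (iii) is not stated verbatim, and I would obtain it from the sup-norm concentration condition (5.7).

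To derive (iii), start from the reductions (5.1)--(5.2): $K(p_0,p)=\text{E}_X\sum_k\pi_{0k}(X)\log\{\pi_{0k}(X)/\pi_k(X)\}$ and an analogous identity for $V$. Fix $\pi$ with $\sum_k\|\pi_k-\pi_{0k}\|_{\infty,\Xa_0}^2\leq\bar{\epsilon}_n^2$. For $n$ large, $\bar{\epsilon}_n$ is smaller than half the uniform lower bound of $\pi_{0k}(x)$ on $\Xa_0$, so $\pi_k(x)\geq\pi_{0k}(x)/2$ uniformly on $\Xa_0$ and both ratios $\pi_{0k}/\pi_k$ and $\pi_k/\pi_{0k}$ are bounded by a fixed constant. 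The standard inequalities $p_0\log(p_0/p)\leq(p_0-p)^2/p$ and, when the log-ratio is bounded, $p_0\log^2(p_0/p)\leq C(p_0-p)^2/p^2$, applied pointwise in $x\in\Xa_0$ and summed over $k$, give $K(p_0,p)\vee V(p_0,p)\leq C'\sum_k\|\pi_k-\pi_{0k}\|_{\infty,\Xa_0}^2\leq C'\bar{\epsilon}_n^2$. Since replacing $\bar{\epsilon}_n$ by $\bar{\epsilon}_n/\sqrt{C'}$ merely rescales the constant $a_2$, hypothesis (iii) follows from (5.7).

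With (i)--(iii) verified, the general theorem delivers $\Pi(h(\pi,\pi_0)\geq M_n\epsilon_n\mid X^{(n)},Y^{(n)})\to 0$ in probability, and the metric inequality $d\leq 2h$ completes the proof. The main technical obstacle is step (iii): the passage from sup-norm closeness to the KL/V neighborhood. The quantity $V$ involves the squared log-ratio and is only controlled when $\pi_k$ itself remains bounded away from $0$ on $\Xa_0$. This is precisely why the theorem assumes $\pi_0$ bounded away from $0$ together with $\pi_k(x)$ bounded away from $0$ on $\Xa_0$---the latter ensures the required uniform ratio bound on the set where the prior concentration is being measured. Everything else is standard bookkeeping within the general framework.
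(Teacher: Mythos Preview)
Your proposal is correct and follows essentially the same route as the paper: invoke the general contraction theorem of Ghosal and van der Vaart, use $d^2\leq 4h^2$ (the paper writes this as $h^2\gtrsim d^2$ in (5.8)) to pass from Hellinger contraction to contraction in $d$, and derive the KL/$V$ prior-mass condition from the sup-norm condition (the paper records this as (5.9) via ``Taylor expansion'', while you spell out the $\chi^2$-type bounds explicitly). One small imprecision: the inequality $p_0\log(p_0/p)\leq(p_0-p)^2/p$ does not hold term-by-term, only after summing over $k$ and using $\sum_k\pi_{0k}=\sum_k\pi_k=1$; this does not affect your argument.
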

\par The proof follows from Theorem 4 of \cite{Ghosal2007a}, by observing that 
	\begin{align}
	h^2(\pi,\pi_0)=\text{E}_{X}\sum_{k=1}^{K}\frac{|\pi_{k}(X)-\pi_{0k}(X)|^2}{|\sqrt{\pi_{k}(X)}+\sqrt{\pi_{0k}(X)}|^2}\gtrsim\text{E}_{X}\sum_{k=1}^{K}|\pi_{k}(X)-\pi_{0k}(X)|^2,
	\end{align}

and by expanding in Taylor's expansion
	\begin{align}
	\text{max}\{K(\pi_0,\pi),V(\pi_0,\pi)\}\lesssim\sum_{k=1}^{K}\|\pi_k-\pi_{0k}\|^2_{\infty,\Xa_0}.
	\end{align}


\par Let $\Pi$ be a generic notation for priors on the number $J$ of basis functions. As in \cite{Shen}, the priors on $J$ and the coefficients of the basis functions $\theta=(\theta_1,\dots,\theta_J)^T$ need to satisfy the conditons (A1) and (A2). For the ordered multinomial probit model, we add condition (A3).
\begin{enumerate}[label=(A\arabic*)]
	\item For some $c_1, c_2>0$, $0\leq t_2\leq t_1\leq 1$, $\exp\{-c_1j\log^{t_1}j\}\leq \Pi(J=j)\leq \exp\{-c_2j\log^{t_2}j\}$.
	\item Given $J$, $\Pi(\|\theta-\theta_0\|_{2}\leq \epsilon)\geq\exp\{-c_3J\log(1/\epsilon)\}$ for every $\|\theta_0\|_{\infty}\leq H$, where $c_3$ is some positive constant, $H$ is chosen sufficiently large, and $\epsilon>0$ is sufficiently small. Also, assume that $\Pi(\theta\not\in[-M,M]^J)\leq J\exp\{-CM^{t_3}\}$ for some cosntant $C$, $t_3>0$,
	\item Given $K$ categories, $\Pi(\|\gamma-\gamma_0\|_{2}\leq \epsilon)\geq\exp\{-c_4K\log(1/\epsilon)\}$, where $c_4$ is some positive constant.
\end{enumerate}
Geometric distribution with $t_1=t_2=0$, and Poisson distribution with $t_1=t_2=1$ on $J$ satisfy (A1). The multivariate normal distribution on $\theta$ and $\gamma$ satisfy (A2) and (A3) respectively. 
\par To obtain the posterior contraction rate, we need to verify the conditions (5.5)--(5.7), and we also need additional assumptions on the basis. We use $\theta^T\psi(t)$ to approximate $\beta(t)$, where $\theta=(\theta_1,\dots,\theta_J)^T$, and $\psi(t)=(\psi_1(t),\dots,\psi_J(t))^T$. Let $\beta_0(t)$ be the true value, and $r=2$ or $\infty$. Assume that there exist a $\theta_0\in \R^J$, $\|\theta_0\|_{\infty}\leq H$ and $K_0\geq 0$ such that
\begin{align}
	&\|\beta_0(\cdot)-\theta_0^T\psi(\cdot)\|_r\lesssim J^{-\alpha},\\
	&\|\theta_1^T\psi(\cdot)-\theta_2^T\psi(\cdot)\|_r\lesssim J^{K_0}\|\theta_1-\theta_2\|_2,\: \theta_1,\theta_2\in \R^J.
\end{align}
\par Remark 2 of \cite{Shen} gave examples of bases satisfying relations (5.10) and (5.11). For B-splines, the relations hold when $K_0=1/2$ with $r=2$, and $K_0=1$ with $r=\infty$.
\begin{remark}
\normalfont Parameter estimation plays a secondary role here. The problem of estimating model parameters is interesting in its own right but is not necessary for good classifications. \cite{Cai} and \cite{Yuan} showed that the parameter function estimation and the prediction from an estimator of the parameter function have different characteristics.
\end{remark}

\subsection{Ordered Multinomial Probit Model}
\par Let $\gamma=(\gamma_{1},\dots,\gamma_{K})^T$ be the vector of the threshold points, and $\gamma_{0}=(\gamma_{01},\dots,\gamma_{0K})^T$ be the vector of the true values of the threshold points. Let $\beta(t)$ be the parameter function on $[0,1]$, and $\beta_{0}(t)$ be the true parameter function on $[0,1]$. Let 
\begin{align}
\pi_k(X)=\Phi\Big(\gamma_{k}-\int \beta(t)X(t)dt\Big)-\Phi\Big(\gamma_{k-1}-\int \beta(t)X(t)dt\Big),
\end{align}
 and 
 \begin{align}
 \pi_{0k}(X)=\Phi\Big(\gamma_{0k}-\int \beta_{0}(t)X(t)dt\Big)-\Phi\Big(\gamma_{0k-1}-\int \beta_{0}(t)X(t)dt\Big).
\end{align} 

\begin{theorem}
 Assume that $\|X\|_1=\int |X(t)|\dd{t}$ is a bounded random variable, the priors satisfy the conditons (A1), (A2) and (A3), and that the basis $\psi(t)$ satisfies (5.10) and (5.11) with $r=\infty$. Then the posterior contraction rate of the ordered multinomial probit model is $\epsilon_{n}\asymp n^{-\alpha/(2\alpha+1)}(\log n)^{\alpha/(2\alpha+1)+(1-t_{2})/2}$ relative to $d(\pi,\pi_0)$. More explicitly, then for every $M_n\rightarrow \infty$, $\Pi(\beta: \rho(\beta,\beta_0)\geq M_n\epsilon_n|X^{(n)},Y^{(n)})\rightarrow0$ in probability, where $\rho(\beta,\beta_0)=\normalfont \text{E}_X|\int(\beta(t)-\beta_0(t))X(t)\dd{t}|$, and $\Pi(\gamma: \normalfont\max_j |\gamma_j-\gamma_{0j}|\geq M_n\epsilon_n|X^{(n)},Y^{(n)})\rightarrow0$ in probability.
\end{theorem}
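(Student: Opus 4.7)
My plan is to invoke Theorem 1 with the sieve $\Wa_n=\{(\beta,\gamma): J\le J_n,\ \|\theta\|_\infty\le M_n,\ \|\gamma\|_\infty\le M_n\}$ for appropriately chosen $J_n\asymp n\bar\epsilon_n^2/\log^{t_2} n$ and $M_n\asymp (n\bar\epsilon_n^2)^{1/t_3}$, and then convert the resulting rate in $d(\pi,\pi_0)$ into rates for $\beta$ and $\gamma$. The first ingredient is the Lipschitz bound on $\pi$ in its parameters: since $\Phi$ is globally Lipschitz with constant $(2\pi)^{-1/2}$ and $\|X\|_1$ is bounded, a straightforward expansion gives
\[
\sum_{k=1}^{K}\|\pi_k-\pi_{0k}\|_{\infty,\Xa_0}^2\;\lesssim\; \|\gamma-\gamma_0\|_\infty^2+\|\beta-\beta_0\|_\infty^2,
\]
which reduces the prior mass and entropy computations to ones already controlled by (A1)--(A3) together with (5.10)--(5.11) for $r=\infty$.

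Next I would verify the three hypotheses of Theorem 1. For the prior concentration condition (5.7), I pick $J^*\asymp \bar\epsilon_n^{-1/\alpha}$, use (5.10) to approximate $\beta_0$ by $\theta_0^T\psi$ within $J^{-\alpha}\lesssim\bar\epsilon_n$ in sup-norm, then use (5.11) (with $K_0=1$ for B-splines) and (A2) to lower bound $\Pi(\|\theta-\theta_0\|_2\lesssim\bar\epsilon_n/J^{K_0})$, and (A3) to lower bound $\Pi(\|\gamma-\gamma_0\|_\infty\lesssim\bar\epsilon_n)$; combined with (A1), the log prior mass comes out to $-c\bar\epsilon_n^{-1/\alpha}\log(1/\bar\epsilon_n)$, which balances against $n\bar\epsilon_n^2$ exactly when $\bar\epsilon_n\asymp n^{-\alpha/(2\alpha+1)}(\log n)^{\alpha/(2\alpha+1)}$. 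For the remaining prior mass condition (5.6), the tail of $J$ from (A1) gives $\Pi(J>J_n)\le\exp\{-c_2 J_n\log^{t_2}J_n\}\le\exp\{-a_1 n\bar\epsilon_n^2\}$ for $J_n$ as above, and the tail of $\theta$ from (A2) together with the Gaussian tail on $\gamma$ from (A3) are made subdominant by choosing $M_n$ a power of $\log n$. Finally, for the entropy bound (5.5), inside the sieve the parameter set sits in $\R^{J_n+K-2}$ of diameter $M_n$; standard covering gives $\log \Na(\epsilon_n,\Wa_n,h)\lesssim (J_n+K)\log(M_n J_n^{K_0}/\epsilon_n)$, which satisfies $\lesssim n\epsilon_n^2$ provided we inflate $\bar\epsilon_n$ by the factor $(\log n)^{(1-t_2)/2}$ to absorb the extra $\log^{t_2}$ that sits in $J_n$. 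This produces exactly the stated rate $\epsilon_n\asymp n^{-\alpha/(2\alpha+1)}(\log n)^{\alpha/(2\alpha+1)+(1-t_2)/2}$.

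Theorem 1 now yields $\Pi(d(\pi,\pi_0)\ge M_n\epsilon_n\mid X^{(n)},Y^{(n)})\to 0$ in probability. The last step, and in my view the main obstacle, is to invert this contraction to recover rates on $\beta$ and on $\gamma$ separately, since the map $(\beta,\gamma)\mapsto \pi$ is only implicitly defined through the $\Phi$'s. For $\beta$ I exploit the constraint $\gamma_1=0$: for $k=1$, $\pi_1(X)=\Phi(-\int\beta(t)X(t)dt)$ depends only on $\beta$, and on the effective range of $\int\beta X$ (which is bounded on $\Xa_0$) the derivative $\varphi$ of $\Phi$ is bounded below, so $|\pi_1(X)-\pi_{01}(X)|\asymp |\int(\beta-\beta_0)X|$. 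Then Cauchy--Schwarz gives $\rho(\beta,\beta_0)=\text{E}_X|\int(\beta-\beta_0)X|\le \{\text{E}_X|\int(\beta-\beta_0)X|^2\}^{1/2}\lesssim d(\pi,\pi_0)$. For $\gamma$, I inspect the $k$-th coordinate for $k\ge 2$: once $\int\beta X\approx \int\beta_0 X$ is established, the bound on $|\pi_k(X)-\pi_{0k}(X)|$ transfers, via the lower bound on $\varphi$ at a judiciously chosen $X$, to a bound on $|\gamma_k-\gamma_{0k}|$, iteratively for $k=2,\dots,K-1$.

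The delicate point in the inversion is ensuring that the relevant range of $\int\beta X$ indeed covers a neighborhood of each $\gamma_{0k}$, so that $\varphi$ stays bounded away from zero on that neighborhood; this is where the hypothesis that $\pi_k$ and $\pi_{0k}$ are bounded away from $0$ on $\Xa_0$ (from Theorem 1) is essential, since it guarantees that with positive probability under $G$ the argument of $\Phi$ falls in the interior of the relevant strip. The rest is routine bookkeeping: the $\beta$ rate is measured in the prediction seminorm $\rho$, the $\gamma$ rate in $\ell_\infty$, and both inherit the rate $M_n\epsilon_n$ up to a constant depending on the lower bound of $\varphi$ on the relevant compact set.
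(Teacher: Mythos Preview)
Your proposal is correct and follows essentially the same route as the paper: the Lipschitz reduction of $\sum_k\|\pi_k-\pi_{0k}\|_{\infty,\Xa_0}^2$ to $\|\gamma-\gamma_0\|_\infty^2+\|\beta-\beta_0\|_\infty^2$, the verification of (5.5)--(5.7) via (A1)--(A3) and (5.10)--(5.11) with the same choices of $\bar J_n$, $J_n$, and $M_n$ (the paper takes $M_n=n^{1/t_3}$), and the inversion step exploiting $\gamma_1=0$ to first extract $\rho(\beta,\beta_0)$ from the $k=1$ coordinate and then recover $|\gamma_k-\gamma_{0k}|$ iteratively for $k\ge 2$ all match the paper's argument. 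Your remarks on the lower bound of $\varphi$ on compacta are exactly what the paper invokes (somewhat tersely) as ``the mean value theorem and the uniform positivity of $\Phi$ on compact interval.''
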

\begin{proof}
	For any $x\in\Xa_0=\{\int |X(t)|\dd{t}\leq M\}$, say, by the Lipschitz continuity of $\Phi$, we have
	\begin{align}
		\label{eqn:eqlabel}
		\begin{split}
		    \left|\pi_k(x)-\pi_{0k}(x)\right|
			&\lesssim \max_k\left|\gamma_k-\gamma_{0k}\right|+\left|\int (\beta(t)-\beta_0(t)x(t)dt\right|\\
			&\lesssim\|\gamma-\gamma_0\|_{\infty}+\|\beta(\cdot)-\beta_{0}(\cdot)\|_{\infty}\int\left|x(t)\right|dt\\
			&\lesssim\|\gamma-\gamma_0\|_{\infty}+\|\beta(\cdot)-\beta_{0}(\cdot)\|_{\infty}.
		\end{split}
	\end{align}
	\par Observe that with the finite random series prior, the $L_{\infty}$-distance between $\beta(\cdot)$ and $\beta_{0}(\cdot)$ is bounded by
	\begin{align}
		\label{eqn:eqlabel}
		\begin{split}
			\|\beta(\cdot)-\beta_{0}(\cdot)\|_{\infty}&=\|\theta^T\psi(\cdot)-\theta_{0}^T\psi(\cdot)+\theta_{0}^T\psi(\cdot)-\beta_{0}(\cdot)\|_{\infty}\\
			&\leq\|\theta^T\psi(\cdot)-\theta_{0}^T\psi(\cdot)\|_{\infty}+\|\theta_{0}^T\psi(\cdot)-\beta_{0}(\cdot)\|_{\infty}.
		\end{split}
	\end{align}
	Then we have
	\begin{align}
		\label{eqn:eqlabel}
		\begin{split}
			\Pi&\Bigl(\sum_{k=1}^{K}\|\pi_{k}-\pi_{0k}\|^2_{\infty,\Xa_0}\leq\bar{\epsilon}_{n}^2\Bigr)\\
			&\geq \Pi(\|\gamma-\gamma_0\|\leq\bar{\epsilon}_{n}/\sqrt{2})\Pi(\|\beta(\cdot)-\beta_0(\cdot)\|_{\infty}\leq\bar{\epsilon}_{n}/\sqrt{2})\\
			&\geq \Pi(\|\gamma-\gamma_0\|\leq\bar{\epsilon}_{n}/\sqrt{2})\Pi(\|\theta-\theta_0\|\leq\bar{\epsilon}_{n}/(2\sqrt{2}\bar{J_n}^{K_0}))\\
			&\gtrsim \exp\bigl\{-K\log({\sqrt{2}}/{\bar{\epsilon}_n})\bigr\}\exp\bigl\{-\bar{J_n}\log({2\sqrt{2}\bar{J_n}^{K_0}}/{\bar{\epsilon}_n})\bigr\}.
		\end{split}
	\end{align}
	To satisfy the relation (5.7), we need $\bar{J}_{n}^{-\alpha}\lesssim \bar{\epsilon}_{n}$ and
	\begin{align}
		K\log(\sqrt{2}/\bar{\epsilon}_n)+\bar{J_n}\log(2\sqrt{2}\bar{J_n}^{K_0}/\bar{\epsilon}_n)\lesssim n\bar{\epsilon}_n^2.
	\end{align}
	Thus (5.17) leads to the conditions that $\bar{J}_{n}\log n\lesssim n\bar{\epsilon}_{n}^2$. Then we obtain the preliminary contraction rate $\bar{\epsilon}_{n}\asymp n^{-\alpha/(2\alpha+1)}(\log n)^{\alpha/(2\alpha+1)}$, for $\bar{J}_{n}\asymp (n/\log n)^{1/(2\alpha+1)}$.
	\par Using (5.14), we obtain
	\begin{align}
		\log\Na(\epsilon_n, \Wa_{n}, h)\lesssim \log\Na(\epsilon_n, \Wa_{n}, \|\cdot\|_{\infty})\lesssim n\epsilon^2_n.
	\end{align}
	According to Theorem 2 of \cite{Shen}, to satify (5.18), we need
	\begin{align}
		J_n\{(K_0+1)\log J_n+\log M_n+C_0\log n\}\leq n\epsilon^2_n,
	\end{align}
	for some positive constant $C_0$. To satify (5.6), we need
	\begin{align}
		bn\bar{\epsilon}^2_n\leq J_n\log^{t_2}J_n,\: \log J_n+n\bar{\epsilon}^2_n\leq M^{t_3}_n,
	\end{align}
	for some $b>0$. For $M_{n}=n^{1/t_{3}}$, (5.20) implies that $J_{n}\log^{t_{2}}n\gtrsim n\bar{\epsilon}_{n}^{2}$. Thus $J_{n}\asymp n^{1/(2\alpha+1)}(\log n)^{2\alpha/(2\alpha+1)-t_{2}}$. Relation (5.19) implies that $J_{n}\log n\lesssim n\epsilon_{n}^2$.
	As a result, the posterior contraction rate is $\epsilon_{n}\asymp n^{-\alpha/(2\alpha+1)}(\log n)^{\alpha/(2\alpha+1)+(1-t_{2})/2}$ relative to $d(\pi,\pi_0)$.
	\par Further, by Jensen's inequality, we have
	\begin{align}
	\text{E}_X|\pi_{k}(X)-\pi_{0k}(X)|^2\geq\Big\{\text{E}_X|\pi_{k}(X)-\pi_{0k}(X)|\Big\}^2.
	\end{align}
	If $k=1$, by the mean value theorem and the uniform positivity of $\Phi$ on compact interval, then 
	\begin{align}
		\label{eqn:eqlabel}
		\begin{split}
		\text{E}_X|\pi_{1}(X)-\pi_{01}(k)|&=\text{E}_X\left|\Phi(-\int\beta(t)X(t)dt)-\Phi(-\int\beta_0(t)X(t)dt)\right|\\
			&\gtrsim \text{E}_X\left|\int\beta(t)X(t)dt-\int\beta_0(t)X(t)dt\right|.
		\end{split}
	\end{align}
	Hence if $\text{E}_X|\pi_{1}(X)-\pi_{01}(X)|^2$ is small, then $\text{E}_X\left|\int\beta(t)X(t)dt-\int\beta_0(t)X(t)dt\right|$ is also small. If $k=2$, we have
	\begin{align}
	\label{eqn:eqlabel}
	\begin{split}
     	\text{E}_X|\pi_{2}(X)-\pi_{02}(k)|&=\text{E}_X\left|\Phi(\gamma_2-\int\beta(t)X(t)dt)-\Phi(\gamma_{02}-\int\beta_0(t)X(t)dt)\right.\\
     &\left.\phantom{=\text{E}_X} -\Phi(-\int\beta(t)X(t)dt)+\Phi(-\int\beta_0(t)X(t)dt)\right|\\
     &\gtrsim\text{E}_X\left|\Phi(\gamma_2-\int\beta(t)X(t)dt)-\Phi(\gamma_{02}-\int\beta_0(t)X(t)dt)\right|\\
     &\phantom{\gtrsim\text{E}_X} -\text{E}_X\left|\Phi(-\int\beta(t)X(t)dt)-\Phi(-\int\beta_0(t)X(t)dt)\right|.
    \end{split}
	\end{align}
	From (5.22), we know that $\text{E}_X\left|\Phi(-\int\beta(t)X(t)dt)-\Phi(-\int\beta_0(t)X(t)dt)\right|$ is small, and if $\text{E}_X|\pi_{2}(X)-\pi_{02}(X)|^2$ is small, then
	\begin{align}
	 \text{E}_X\left|\Phi(\gamma_2-\int\beta(t)X(t)dt)-\Phi(\gamma_{02}-\int\beta_0(t)X(t)dt)\right|
	\end{align}
	is also small. By the mean value theorem and the uniform positivity of $\Phi$ on compact interval, we have
	\begin{align}
	\label{eqn:eqlabel}
	\begin{split}
	&\text{E}_X\left|\Phi(\gamma_2-\int\beta(t)X(t)dt)-\Phi(\gamma_{02}-\int\beta_0(t)X(t)dt)\right|\\
	&\gtrsim\text{E}_X\left|\gamma_2-\gamma_{02}-\int\beta(t)X(t)dt+\int\beta_0(t)X(t)dt\right|\\
	&\gtrsim\left|\gamma_2-\gamma_{02}\right|-\text{E}_X\left|\int\beta(t)X(t)dt-\int\beta_0(t)X(t)dt\right|.
	\end{split}
	\end{align}
	Hence $|\gamma_2-\gamma_{02}|$ is small. Similarly, we can prove that $|\gamma_k-\gamma_{0k}|$ is small for any $k$.
\end{proof}

\subsection{Unordered Multinomial Probit Model}
Note that by (4.14)
\begin{align}
	\label{eqn:eqlabel}
	\begin{split}
		\pi_k(X)=\frac{1}{\sqrt{\pi}}\int_{0}^{\infty}&\Bigl\{\prod_{l=1}^{K-1}\Phi(-z\sqrt{2}-\int\beta_{l}(t)X(t)dt)\\
		&+\prod_{l=1}^{K-1}\Phi(z\sqrt{2}-\int\beta_{l}(t)X(t)dt)\Bigr\}e^{-z^2}dz
	\end{split}
\end{align}
\\
\begin{theorem}
	\sloppy Assume that $\|X\|_1=\int |X(t)|\dd{t}$ is a bounded random variable, the priors satisfy the conditons (A1) and (A2), and that the basis $\psi(t)$ satisfies (5.10) and (5.11) with $r=\infty$. Then the posterior contraction rate of the unordered multinomial probit model is $\epsilon_{n}\asymp n^{-\alpha/(2\alpha+1)}(\log n)^{\alpha/(2\alpha+1)+(1-t_{2})/2}$ relative to $d(\pi,\pi_0)$.
\end{theorem}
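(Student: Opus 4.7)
The plan is to follow the same template as the proof of Theorem 2, namely to verify the three hypotheses (5.5)--(5.7) of Theorem 1 with the sieve $\mathcal{W}_n$ built from finite random series of length $J\le J_n$ with coefficients in $[-M_n,M_n]^J$ for each of the $K-1$ unknown functions $\beta_1,\dots,\beta_{K-1}$. The only genuinely new ingredient is the need to control the more complicated map $(\beta_1,\dots,\beta_{K-1})\mapsto \pi_k(\cdot)$ given by (5.27), in place of the single difference of normal c.d.f.'s that appeared in the ordered case, so I will begin by establishing the relevant Lipschitz bound and then reuse the Shen--Ghosal machinery.

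First I would prove the key pointwise estimate: on $\mathcal{X}_0=\{x:\int|x(t)|dt\le M\}$,
\begin{align}
\bigl|\pi_k(x)-\pi_{0k}(x)\bigr|\lesssim \sum_{l=1}^{K-1}\Bigl|\int(\beta_l(t)-\beta_{0l}(t))x(t)\,dt\Bigr|\lesssim \sum_{l=1}^{K-1}\|\beta_l-\beta_{0l}\|_{\infty}.
\end{align}
This follows from the formula (5.27) by the elementary inequality $\bigl|\prod_l a_l-\prod_l b_l\bigr|\le \sum_l |a_l-b_l|$ when $|a_l|,|b_l|\le 1$, together with the Lipschitz continuity of $\Phi$; the $e^{-z^2}$ weight integrates to a constant so the bound passes through the outer integral. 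Combined with the triangle inequality $\|\beta_l-\beta_{0l}\|_\infty\le \|\theta_l^T\psi-\theta_{0l}^T\psi\|_\infty+\|\theta_{0l}^T\psi-\beta_{0l}\|_\infty$ and assumptions (5.10)--(5.11) with $r=\infty$, this controls $\sum_k\|\pi_k-\pi_{0k}\|_{\infty,\mathcal{X}_0}^2$ by $\sum_l(\|\theta_l-\theta_{0l}\|_2 J^{K_0}+J^{-\alpha})^2$.

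Next I would verify the prior concentration condition (5.7). Since the $K-1$ coefficient functions are given independent priors of the same type as in the ordered case, by independence
\begin{align}
\Pi\!\left(\sum_{k=1}^K\|\pi_k-\pi_{0k}\|_{\infty,\mathcal{X}_0}^2\le \bar\epsilon_n^2\right)\ge \prod_{l=1}^{K-1}\Pi\!\left(\|\beta_l-\beta_{0l}\|_\infty\le c\bar\epsilon_n\right),
\end{align}
and each factor is lower bounded via (A1)--(A2) exactly as in (5.16), yielding the requirement $\bar J_n\log n\lesssim n\bar\epsilon_n^2$ together with $\bar J_n^{-\alpha}\lesssim \bar\epsilon_n$, so that $\bar J_n\asymp(n/\log n)^{1/(2\alpha+1)}$ and $\bar\epsilon_n\asymp n^{-\alpha/(2\alpha+1)}(\log n)^{\alpha/(2\alpha+1)}$. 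The only change from Theorem 2 is that the factor $\exp\{-K\log(\sqrt 2/\bar\epsilon_n)\}$ for the cut-points is replaced by a $(K-1)$-fold product of coefficient concentrations, which is absorbed into the same rate because $K$ is fixed.

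For the entropy condition (5.5) I would use $h(\pi,\pi')\lesssim d(\pi,\pi')\lesssim \sum_l\|\beta_l-\beta_{0l}\|_\infty$ (from the same pointwise bound) and estimate $\log\mathcal{N}(\epsilon_n,\mathcal{W}_n,\|\cdot\|_\infty)$ by Theorem 2 of \cite{Shen}, giving the requirement $J_n\{(K_0+1)\log J_n+\log M_n+C_0\log n\}\le n\epsilon_n^2$. The remaining mass condition (5.6) follows from (A1) and the tail bound in (A2), applied to each of the $K-1$ coefficient vectors, with $M_n=n^{1/t_3}$ as before. Combining these yields $J_n\asymp n^{1/(2\alpha+1)}(\log n)^{2\alpha/(2\alpha+1)-t_2}$ and the advertised rate $\epsilon_n\asymp n^{-\alpha/(2\alpha+1)}(\log n)^{\alpha/(2\alpha+1)+(1-t_2)/2}$ relative to $d(\pi,\pi_0)$.

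The main obstacle is the initial Lipschitz bound on $\pi_k$: the integrand in (5.27) involves a product of $K-1$ normal c.d.f.'s that depend on all the $\int\beta_l X$ simultaneously, so one must check that the perturbation can be distributed additively over the indices $l$ before the $z$-integration, uniformly in $z\ge 0$. Once this is in hand, the analysis is a straightforward adaptation of the ordered case with the cut-point block removed and the single coefficient block replaced by a product over $l=1,\dots,K-1$.
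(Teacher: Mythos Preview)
Your proposal is correct and follows essentially the same route as the paper's proof: reduce the unordered probit case to the template of Theorem~2 via a Lipschitz bound on $\pi_k$, then invoke (A1)--(A2), (5.10)--(5.11), and the Shen--Ghosal sieve/entropy argument (5.18)--(5.20) with the cut-point block removed and the single coefficient block replaced by $K-1$ independent ones. Your treatment of the key step is in fact more explicit than the paper's: you spell out the telescoping product inequality $|\prod_l a_l-\prod_l b_l|\le\sum_l|a_l-b_l|$ to pass from (5.26) to a sum over $l$, whereas the paper compresses this into the single line (5.27), which as written displays only $\beta_k$ on the right-hand side even though $\pi_k$ depends on all $\beta_1,\dots,\beta_{K-1}$.
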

\begin{proof}
	For some $M>0$, $\text{P}(\Xa_0)=1$ for $\Xa_0=\{\int\left|X(t)\right|\dd{t}\leq M\}$. For any $x\in\Xa_0$, by the Lipschitz continuity of the function $\Phi$, we have
	\begin{align}
		\label{eqn:eqlabel}
		\begin{split}
			|\pi_{k}(x)-\pi_{0k}(x)|
			&\lesssim \left|\int \beta_k(t)x(t)dt-\int \beta_{0k}(t)x(t)dt\right|\\
			&\lesssim \int \left|\beta_k(t)-\beta_{0k}(t)\right| \left|x(t)\right|dt\\
			&\lesssim \|\beta_k(\cdot)-\beta_{0k}(\cdot)\|_{\infty}.\\
		\end{split}
	\end{align}
	\par The $L_{\infty}$-distance between $\beta_k(\cdot)$ and $\beta_{0k}(\cdot)$ is bounded by
	\begin{align}
		\label{eqn:eqlabel}
		\begin{split}
			\|\beta_k(\cdot)-\beta_{0k}(\cdot)\|_{\infty}&=\|\theta_k^T\psi(\cdot)-\theta_{0k}^T\psi(\cdot)+\theta_{0k}^T\psi(\cdot)-\beta_{0k}(\cdot)\|_{\infty}\\
			&\leq\|\theta_k^T\psi(\cdot)-\theta_{0k}^T\psi(\cdot)\|_{\infty}+\|\theta_{0k}^T\psi(\cdot)-\beta_{0k}(\cdot)\|_{\infty}.
		\end{split}
	\end{align}
	Then we have
	\begin{align}
		\label{eqn:eqlabel}
		\begin{split}
			\Pi\Big(\sum_{k=1}^{K}\|\pi_{k}-\pi_{0k}\|^2_{\infty,\Xa_0}\leq\bar{\epsilon}_{n}^2\Big)
			&\geq \Pi\Big(\sum_{k=1}^{K}\|\beta_k(\cdot)-\beta_{0k}(\cdot)\|^2_{\infty}\leq\bar{\epsilon}_{n}^2\Big)\\
			&\geq \Pi(\|\theta_k-\theta_{0k}\|\leq\bar{\epsilon}_{n}/(2\sqrt{K}\bar{J_n}^{K_0}))\\
			&\gtrsim \exp\bigl\{-\bar{J_n}\log({2\sqrt{K}\bar{J_n}^{K_0}}/{\bar{\epsilon}_n})\bigr\}.
		\end{split}
	\end{align}
	To satisfy the relation (5.7), we need  $\bar{J}_{n}^{-\alpha}\lesssim \bar{\epsilon}_{n}$ and
	\begin{align}
		\bar{J_n}\log(2\sqrt{K}\bar{J_n}^{K_0}/\bar{\epsilon}_n)\lesssim n\bar{\epsilon}_n^2.
	\end{align}
	Thus, (5.30) leads to the conditions that $\bar{J}_{n}\log n\lesssim n\bar{\epsilon}_{n}^2$. Then we obtain the preliminary contraction rate $\bar{\epsilon}_{n}\asymp n^{-\alpha/(2\alpha+1)}(\log n)^{\alpha/(2\alpha+1)}$, for $\bar{J}_{n}\asymp (n/\log n)^{1/(2\alpha+1)}$.
	\par Following the same arguments as (5.18)--(5.20), the posterior contraction rate is $\epsilon_{n}\asymp n^{-\alpha/(2\alpha+1)}(\log n)^{\alpha/(2\alpha+1)+(1-t_{2})/2}$ relative to $d(\pi,\pi_0)$.
\end{proof}

\subsection{Multinomial Logistic Model}
\par Let $\beta_{k}(t)$, $k=1,\dots,K-1$, be the coefficient functions on $[0,1]$, and $\beta_{0k}(t)$, $k=1,\dots,K-1$, be the true coefficient functions on $[0,1]$.\\ 
\begin{theorem}
	\sloppy Assume that $\|X\|_1=\int |X(t)|\dd{t}$ is a bounded random variable, the priors satisfy the conditons (A1) and (A2), and that the basis $\psi(t)$ satisfies (5.10) and (5.11) with $r=\infty$. Then the posterior contraction rate of the multinomial logistic model is $\epsilon_{n}\asymp n^{-\alpha/(2\alpha+1)}(\log n)^{\alpha/(2\alpha+1)+(1-t_{2})/2}$ relative to $d(\pi,\pi_0)$.
\end{theorem}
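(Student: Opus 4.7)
The plan is to mirror the argument given for the unordered multinomial probit model (Theorem 3), verifying conditions (5.5)--(5.7) of Theorem 1 for the posterior distribution on $(\pi_1,\ldots,\pi_K)$ that is induced by the finite random series priors placed on $\beta_1,\ldots,\beta_{K-1}$. Because $\pi_K = 1 - \sum_{k=1}^{K-1}\pi_k$, it suffices to control the first $K-1$ components, which are given by the softmax of the linear predictors $\eta_l(x) = \int\beta_l(t)x(t)\dd{t}$ for $l=1,\ldots,K-1$ (with $\eta_K\equiv 0$).

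The structural ingredient specific to the multinomial logistic link is that the softmax map $(\eta_1,\ldots,\eta_{K-1})\mapsto(\pi_1,\ldots,\pi_K)$ is globally Lipschitz: a direct computation gives $\partial\pi_k/\partial\eta_l = \pi_k(\delta_{kl}-\pi_l)$, so each partial derivative is bounded by $1$ uniformly on $\R^{K-1}$. Hence for any $x\in\Xa_0=\{\int|X(t)|\dd{t}\leq M\}$,
\begin{align*}
|\pi_k(x)-\pi_{0k}(x)| \lesssim \max_{1\leq l\leq K-1}|\eta_l(x)-\eta_{0l}(x)| \leq M\max_{l}\|\beta_l(\cdot)-\beta_{0l}(\cdot)\|_{\infty},
\end{align*}
which is the exact analogue of (5.28). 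Expanding each $\beta_l$ as a finite random series and using assumptions (5.10)--(5.11) gives, as in (5.29),
\begin{align*}
\|\beta_l(\cdot)-\beta_{0l}(\cdot)\|_{\infty} \lesssim J^{K_0}\|\theta_l-\theta_{0l}\|_2 + J^{-\alpha}.
\end{align*}

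With this Lipschitz-plus-bias decomposition in hand, the remainder of the argument is a direct transcription of the proof of Theorem 3. For the prior concentration (5.7), intersect the events $\{\|\theta_l-\theta_{0l}\|_2\leq\bar\epsilon_n/(c\bar J_n^{K_0})\}$ over $l=1,\ldots,K-1$ and apply (A2) to pick up a factor $\exp\{-(K-1)c_3\bar J_n\log(c\bar J_n^{K_0}/\bar\epsilon_n)\}$, choosing $\bar J_n^{-\alpha}\lesssim\bar\epsilon_n$ to absorb the bias; balancing yields $\bar J_n\asymp(n/\log n)^{1/(2\alpha+1)}$ and the preliminary rate $\bar\epsilon_n\asymp n^{-\alpha/(2\alpha+1)}(\log n)^{\alpha/(2\alpha+1)}$. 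The sieve $\Wa_n = \{J\leq J_n,\ \max_l\|\theta_l\|_{\infty}\leq M_n\}$ handles the remaining conditions: (5.6) follows from (A1) together with the tail part of (A2), while (5.5) follows by combining the Lipschitz bound above with the coordinate-wise covering number estimate in Theorem 2 of \cite{Shen}. Solving $J_n\{(K_0+1)\log J_n+\log M_n+C_0\log n\}\leq n\epsilon_n^2$ with $M_n=n^{1/t_3}$ then produces $J_n\asymp n^{1/(2\alpha+1)}(\log n)^{2\alpha/(2\alpha+1)-t_2}$ and the stated rate $\epsilon_n\asymp n^{-\alpha/(2\alpha+1)}(\log n)^{\alpha/(2\alpha+1)+(1-t_2)/2}$ relative to $d(\pi,\pi_0)$.

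The main point that is not merely bookkeeping is verifying that the softmax Lipschitz constant does not degrade even though, on $\Wa_n$, the linear predictors $\eta_l$ may take values in an interval growing with $M_n$; the Jacobian computation above shows this is a genuine non-issue, since the bound $|\partial\pi_k/\partial\eta_l|\leq 1$ is global. Once this is observed, no ingredients beyond (5.10), (5.11), (A1) and (A2) are needed, and in particular assumption (A3) is not required because no cut-point parameters appear in the logistic model.
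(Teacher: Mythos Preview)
Your proposal is correct and follows exactly the approach the paper itself takes: the paper's entire proof of this theorem is the single sentence ``The proof is similar to that of Theorem 3,'' and your write-up carries out precisely that similarity argument, with the softmax Lipschitz bound playing the role that the Lipschitz continuity of $\Phi$ plays in (5.27). If anything, you have supplied more detail than the paper does, in particular the explicit Jacobian observation $\partial\pi_k/\partial\eta_l=\pi_k(\delta_{kl}-\pi_l)$ justifying the global Lipschitz property of the logistic link.
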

\begin{proof}
	The proof is similar to that of Theorem 3.
\end{proof}

\section{Discriminant Analysis}
\par As a comparison to those multinomial models, we use Bayesian discriminant analysis to classify the functional data. Instead of modeling the class probability directly, the discriminant analysis uses Bayes's rule to compute the marginal likelihood of $Y_i$ \citep{Gelman}.
The classical discriminant analysis applies only to multivariate data. For functional data, we can use certain orthogonal linear functions to determine the classification probabilities:
\begin{align}
	(f_{i1},\dots,f_{im})^{T}=\left(\int \beta_{1}(t)X_{i}(t)dt,\dots,\int \beta_{m}(t)X_{i}(t)dt\right)^{T}
\end{align}
\par Ideally these $\beta_{1}(t),\dots,\beta_{m}(t)$ are unknown, but putting a prior on these functions with identifiability restrictions is complicated. We instead consider $\beta_{1}(t),\dots,\beta_{m}(t)$ to be known as the first $m$ principal components \citep{Ramsay}, but let the means and the covariance matrices be unknown. Then discriminant analysis can be applied to the $m$ principal components.

\subsection{Linear Discriminant Analysis} 
\par Linear discriminant analysis assumes that for each of the $K$ category, the set of linear function $(f_1,\dots,f_m)$ follows a normal distribution with the same covarince matrix: $(f_{il1},\dots,f_{ilm})^{T}\sim \text{N}(\mu_{l},\Sigma)$, where $\mu_{l}$ is the population mean of category $l$, $l=1,\dots,K$, $i=1,\dots,n_{l}$, and $n_{l}$ is the number of data in category $l$. Then the probability of choosing category $k$ is given by
\begin{align}
	\text{P}(Y_{i}=k|X_{i})=\frac{p_k\cdot\phi(f_{ik1},\dots,f_{ikm};\mu_{k},\Sigma)}{\sum_{l=1}^{K}p_l\cdot\phi(f_{il1},\dots,f_{ilm};\mu_{l},\Sigma)},
\end{align}
where $\phi(f_1,\dots,f_m;\mu,\Sigma)$ is the multivariate normal density function with mean $\mu$ and covariace $\Sigma$, and $p_l$, $l=1,\dots,K$, are the probability of choosing category $l$.
\par The variables $f_{il1},\dots,f_{ilm}$ are the $m$ principal components of $X_{i}(t)$ in categoty $l$, where $l=1,\dots,K$. Define $f_{il}=(f_{il1},\dots,f_{ilm})^{T}$, where $i=1,\dots,n_{l}$, and $\sum_{l=1}^{K}n_{l}=n$.
To estimate the mean $\mu_{l}$ for each category $l$, and the common covariance $\Sigma$ among all categories, we use the conjugate normal-inverse-Wishart prior with hyperparameters \citep{Gelman} for $(\mu_{l},\Sigma)$ 
\begin{align}
	\Sigma \sim \text{IW}_{\nu_{0}}(\Lambda_{0}^{-1}), \; \mu_{l}|\Sigma \sim \text{N}(\mu_{l0},\Sigma/\kappa_{0}).
\end{align}
Then the posterior distribution of $(\mu_{l},\Sigma)$ can be obtained in the following order
\begin{align}
	\Sigma|Y \sim \text{IW}_{\nu_{n}}(\Lambda_{n}^{-1}), \; \mu_{l}|\Sigma, Y \sim \text{N}(\mu_{ln},\Sigma/\kappa_{n}),
\end{align}
where $\nu_{n}=\nu_{0}+n$, $\bar{f_{l}}=\sum_{i=1}^{n_{l}}f_{il}/n_{l}$, $S=\sum_{l=1}^{K}\sum_{i=1}^{n_{l}}(f_{il}-\bar{f_{l}})(f_{il}-\bar{f_{l}})^{T}$,
\begin{align}
	\Lambda_{n}=\Lambda_{0}+S+\sum_{l=1}^{K}\frac{\kappa_{0}n_{l}}{\kappa_{0}+n_{l}}(\bar{f_{l}}-\mu_{l0})(\bar{f_{l}}-\mu_{l0})^{T},
\end{align}
and 
\begin{align}
	\kappa_{n}=\kappa_{0}+n,\: \mu_{ln}=\frac{\kappa_{0}\mu_{l0}+n_{l}\bar{f_{l}}}{\kappa_{0}+n_{l}}, \: l=1,\dots,K.
\end{align}

\subsection{Quadratic Discriminant Analysis}
\par Quadratic discriminant analysis is defined in a similar way, except that it has a different covariance matrix for each category. The probability of choosing category $k$ is given by
\begin{align}
	\text{P}(Y_{i}=k|X_{i})=\frac{p_k\cdot\phi(f_{ik1},\dots,f_{ikm};\mu_{k},\Sigma_{k})}{\sum_{l=1}^{K}p_l\cdot\phi(f_{il1},\dots,f_{ilm};\mu_{l},\Sigma_{l})}.
\end{align}
\par To estimate the mean $\mu_{l}$ and the covariance $\Sigma_{l}$ for each category $l$, where $l=1,\dots,K$, we use the conjugate normal-inverse-Wishart prior with hyperparameters for $(\mu_{l},\Sigma_{l})$
\begin{align}
	\Sigma_{l} \sim \text{IW}_{\nu_{l0}}(\Lambda_{l0}^{-1}), \quad \mu_{l}|\Sigma_{l} \sim \text{N}(\mu_{l0},\Sigma_{l}/\kappa_{l0}),
\end{align}
for $l=1,\dots,K$.
Then the posterior distribution of $(\mu_{l},\Sigma_{l})$ can be obtained in the following order
\begin{align}
	\Sigma_{l}|Y \sim \text{IW}_{\nu_{ln}}(\Lambda_{ln}^{-1}), \quad \mu_{l}|\Sigma_{l}, Y \sim \text{N}(\mu_{ln},\Sigma_{l}/\kappa_{ln}),
\end{align}
where $\nu_{ln}=\nu_{l0}+n_{l}$, $\bar{f_{l}}=\sum_{i=1}^{n_{l}}f_{il}/n_{l}$, $S_{l}=\sum_{i=1}^{n_{l}}(f_{il}-\bar{f_{l}})(f_{il}-\bar{f_{l}})^{T}$,
\begin{align}
	\Lambda_{ln}=\Lambda_{l0}+S_{l}+\frac{\kappa_{l0}n_{l}}{\kappa_{l0}+n_{l}}(\bar{f_{l}}-\mu_{l0})(\bar{f_{l}}-\mu_{l0})^{T},
\end{align}
and 
\begin{align}
	\kappa_{ln}=\kappa_{l0}+n_{l},\: \mu_{ln}=\frac{\kappa_{l0}\mu_{l0}+n_{l}\bar{f_{l}}}{\kappa_{l0}+n_{l}},\: l=1,\dots,K.
\end{align}

\section{Simulation}

\subsection{Data Generation}
\par The simulated data are generated following different data generating process. All of the simulated data have three categories. In all cases considered below, we generate the functional data from a Gaussian process at discrete time points 0, 0.01, $\dots$, 0.99, 1, with the mean function $\sin t$ and variation kernel $100\exp\{-100(t_i-t_j)^2\}$, where $t_i$ and $t_j$ were the discrete time point 0, 0.01, $\dots$, 0.99, 1.
\begin{figure*}
	\begin{subfigure}[b]{0.3\textwidth}
		\centering
		\includegraphics[width=\textwidth]{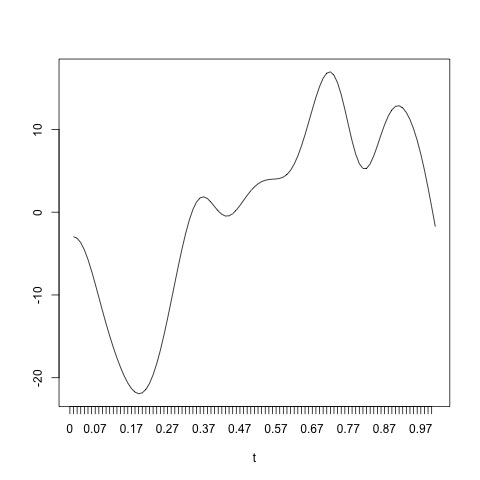}
		\caption{$\beta(t)$ for the ordered multinomial probit model}
	\end{subfigure}
	\quad
	\begin{subfigure}[b]{0.3\textwidth}
		\centering
		\includegraphics[width=\textwidth]{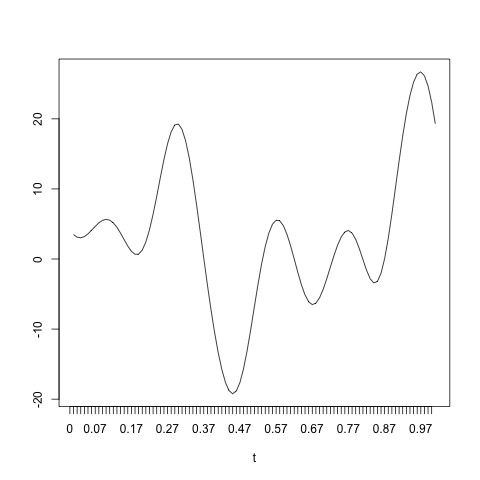}
		\caption{$\beta_1(t)$ for the unordered multinomial probit model}
	\end{subfigure}
	\quad
	\begin{subfigure}[b]{0.3\textwidth}
		\centering
		\includegraphics[width=\textwidth]{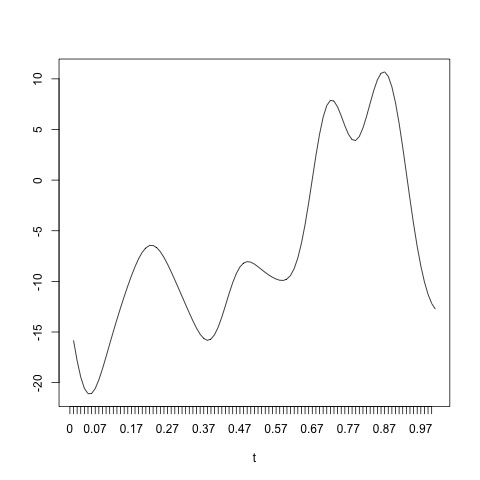}
		\caption{$\beta_2(t)$ for the unordered multinomial probit model}
	\end{subfigure}
	\vskip\baselineskip
	\begin{subfigure}[b]{0.3\textwidth}
		\centering
		\includegraphics[width=\textwidth]{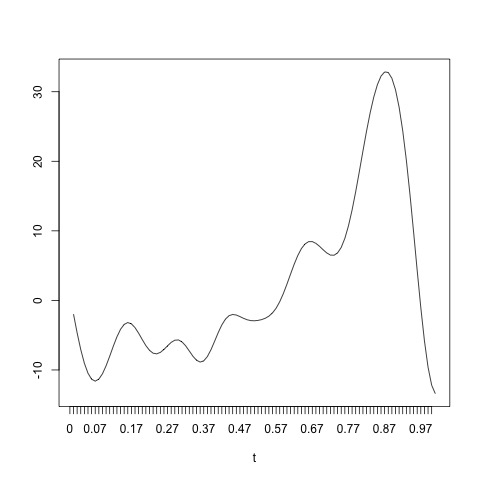}
		\caption{$\beta_3(t)$ for the unordered multinomial probit model}
	\end{subfigure}
	\quad
	\begin{subfigure}[b]{0.3\textwidth}
		\centering
		\includegraphics[width=\textwidth]{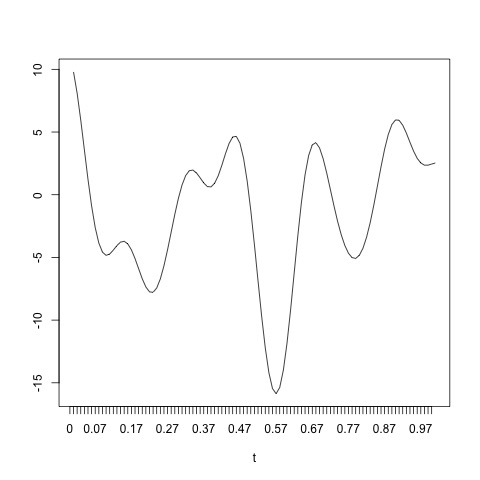}
		\caption{$\beta_1(t)$ for the multinomial logistic model}
	\end{subfigure}
	\quad
	\begin{subfigure}[b]{0.3\textwidth}
		\centering
		\includegraphics[width=\textwidth]{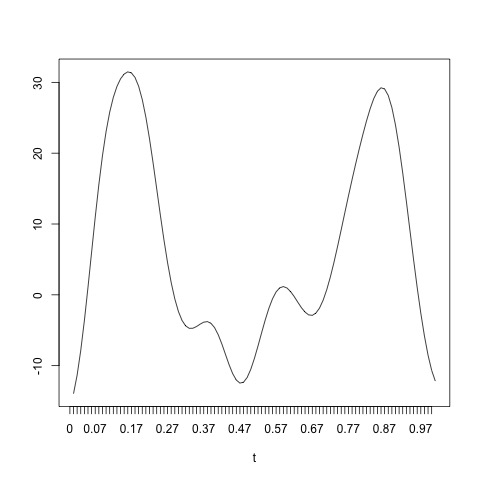}
		\caption{$\beta_2(t)$ for the multinomial logistic model}
	\end{subfigure}
	\quad
	\caption{Coefficient functions for the multinomial models}
\end{figure*}
\par For the ordered multinomial probit data, the coefficient function $\beta(t)$ is plotted in Figure 1 (a), and the four threshold points are chosen to be $-\infty$, 0, 8, $\infty$. The four cut-off points construct three intervals. If the inner product of a functional data and the coefficient function plus a standard normal variable falls in the $kth$ interval $(\gamma_{k-1},\gamma_k)$, then the functional data attributes to the category $k$. 
\par For unordered multinomial probit data, the coefficient functions $\beta_1(t), \beta_2(t), \beta_3(t)$ are plotted in Figure 1 (b)-(d). The inner product of a functional data and the three coefficient functions are added with standard normal variables, respectively. We sample from these three normal variables, and obtaine the corresponding probabilities. Then the functional data belonges to the category with the largest sampled value.
\par For the multinomial logistic data, the coefficient functions $\beta_1(t), \beta_2(t)$ are plotted in Figure 1 (e)-(f), and the third coefficient function $\beta_3(t)$ can be assumed to be zero everywhere. We compute the probability of a functional data falling into each category. Then the data attributes to the category with the largest probability.
\par For data satisfying the assumption of the linear discriminant analysis, we generate them from three Gaussian processes with different mean functions $\sin t+2\cos t$, $\sin t$, and $\sin t-3\cos t$, but the same variation kernel $\exp\{-30(t_i - t_j)^2\}$.
\par For data satisfying the assumption of the quadratic discriminant analysis, we generate them from three Gaussian processes with different mean functions and different variation kernels. The mean functions are $\sin t+2\cos t$, $\sin t$, and $\sin t-3\cos t$, and the variation kernels are $\exp\{-2\sin^2(\pi(t_i-t_j))\}$, $\exp\{-30(t_i - t_j)^2\}$, and $\exp\{-|t_i - t_j|\}$, respectively.
\par In this simulation study, we generate total 900 (300 for each category) functional data for each type of dataset. We constructe the training data with 720 (240 for each category) of them and the testing data with the remaining 180 (60 for each category) of them. 

\subsection{Basis Functions}
\par For models using the finite random series prior, we consider the B-spline basis. The B-spline basis functions on interval $[0, 1]$ can be created using the R package $\tt{fda}$. In this simulation study, we put a geometric prior with $p=0.5$ on $J$. We only consider the possible number of B-spline basis functions to be $J=5, \dots, 15$, since the probability outside this range is too small. Those B-spline basis functions are generated at the same discrete time points as the functional data, that is $0, 0.01, \dots, 0.99, 1$. 

\subsection{Results}
\par Under the chosen models, we apply Baysian estimation methods described in Section 3 on the training data. In this study, 5000 MCMC iterations are obtained, and the first 1000 of them are discarded as burn-in. We use the last 4000 MCMC output of the parameter $B$ to classify the 180 transformed testing data, where $B=(\theta,\gamma_{2},\gamma_{3})$ for the ordered multinomial probit model, $B=\Theta$ for the unorederd multinomial probit model, $B=(\theta_{1},\theta_{2})$ for the logistic model, $B=(\mu_{1},\mu_{2},\mu_{3},\Sigma)$ for the linear discriminant analysis model, and $B=(\mu_{1},\mu_{1},\mu_{3},\Sigma_{1},\Sigma_{2},\Sigma_{3})$ for the quadratic discriminant analysis model. A transformed testing data $z_{i}$ or $f_{i}$ is in categoty $k$ if $\sum_{g=1}^{4000}\mathbbm{1}(Y_{i}=k|z_{i} \ \text{or} \ f_{i}, B^{(g)})>\sum_{g=1}^{4000}\mathbbm{1}(Y_{i}=l|z_{i} \ \text{or} \ f_{i}, B^{(g)})$, where $l\neq k$. Then we use the techniques described in Section 4 to average the results from the multinomial models. As a comparison with the Bayesian method, the linear support vector machine (SVM) is also applied to the principal components of these training data, and made predictions on the testing data. To apply SVM, we use the R package $\tt{e1071}$. Table 1 shows the averaged misclassification rates for each data type under different models.

\renewcommand*{\arraystretch}{1.2}
\begin{longtable}{c@{\hspace{1.5\tabcolsep}}c@{\hspace{1.5\tabcolsep}}c@{\hspace{1.5\tabcolsep}}c@{\hspace{1.5\tabcolsep}}c@{\hspace{1.5\tabcolsep}}c@{\hspace{1.5\tabcolsep}}c}
	\caption{Averaged misclassification rates for simulated data}
	\endfirsthead
	\endhead
	\hline
	Dataset&OMP Model&UMP Model&MLO Model&LDA&QDA&SVM\\
	\hline
	OMP & $7.69\%$ & $30.56\%$  & $28.33\%$  & $38.89\%$ & $48.89\%$  & $15.00\%$ \\
	UMP & $38.96\%$ & $7.22\%$  & $7.78\%$  & $21.11\%$ & $21.11\%$  & $10.56\%$ \\
	MLO & $49.44\%$ & $4.75\%$  & $3.89\%$  & $32.22\%$ & $36.11\%$  & $7.78\%$ \\
	LDA & $26.32\%$ & $25.69\%$  & $26.11\%$  & $5.00\%$ & $5.00\%$  & $7.78\%$ \\
	QDA & $24.28\%$ & $21.95\%$  & $21.67\%$  & $10.56\%$ & $9.44\%$  & $8.33\%$ \\
	\hline
\end{longtable}

\section{Application}
\par We also test our models on phoneme data. This dataset can be found in the R package $\tt{fds}$, and can also be found at \url{https://www.math.univ-toulouse.fr/staph/npfda/}. The original data has 2000 $(X,Y)$ pairs, and five categories. For computational efficiency, we only use 900 of them from three categories. We split the data into training and testing set by randomly sampling from each class, and keeping the same percentage of samples of each class as the complete set. The size of the testing data is $20\%$ of the total data size. That is we have 240 data for each class in the training set, and 60 data for each class in the testing set. We put a geometric prior with $p=0.5$ on $J$, and it is enough for us to consider the number of B-spline basis functions to be $J=5, \dots, 15$. We obtain 5000 MCMC iterations and discard the first 1000 of them as burn-in.
\renewcommand*{\arraystretch}{1.2}
\begin{longtable}{ccccc}
	\caption{Averaged misclassification rates for phoneme data}
	\endfirsthead
	\endhead
	\hline
	OMP Model & UMP Model & MLO Model & LDA & QDA\\
	\hline
	$9.84\%$ & $0.56\%$  & $5.56\%$  & $7.78\%$ & $5.00\%$ \\
	\hline
\end{longtable}
\renewcommand*{\arraystretch}{1.2}
\begin{longtable}{ccc}
	\caption{Estimate and standard error of the posterior mean for the ordered multinomial model $(J=6)$}
	\endfirsthead
	\endhead
	\hline
	&$\gamma_2$&$\theta$\\
	\hline
	Estimate& $3.87$&$(52.12,-9.60,-8.89,-0.19,-4.91,2.85)$\\
	\hline
	Standard error & $0.03$ &$(0.34,0.11,0.13,0.08,0.10,0.10)$\\
	\hline
\end{longtable}
\par According to Table 2, the unordered multinomial probit model is the best model for the phoneme data. For this data, the categories are not naturally ordered, and hence ordered multinomial probit model is not natural for this problem, but we include it in the analysis for comparison. Figure 2 displays the cut-point $\gamma_2$ sampled by Metropolis-Hastings under different $J$, and we can tell that $\gamma_2$ converges around 500 iterations. Tables 3, 4, and 5 show the estimate and standard error of the posterior mean of the phoneme data under ordered multinomial probit model, multinomial logistic model, and unordered multinomial probit model, when $J=6$, $J=10$, and $J=14$, respectively. We choose these $J$ values because under these values the model has the largest posterior probability $\text{P}(J|Y)$. Although ordered multinomial probit model is not intuitive in this context, its performance is not too inferior.

\renewcommand*{\arraystretch}{1.2}
\begin{longtable}{cc}
	\caption{Estimate and standard error of the posterior mean for the multinomial logistic model $(J=10)$}
	\endfirsthead
	\endhead
	\hline
	&$\theta_2$\\
	\hline
	Estimate& $(13.10,18.25,6.04,-15.29,15.52,1.30,-5.81,4.65,-28.24,-16.91)$ \\
	\hline
	Standard error& $(0.94,1.08,0.64,0.63,0.75,0.66,0.37,0.48,0.70,1.03)$\\
	\hline
	&$\theta_3$\\
	\hline
	Estimate&$(39.42,34.30,-3.47,5.26,-7.36,0.38,-17.99,-4.43,-11.23,2.44)$\\
	\hline
	Standard error& $(1.21,1.44,0.42,0.31,1.08,0.27,0.53,0.33,0.93,0.32)$ \\
	\hline
\end{longtable}

\renewcommand*{\arraystretch}{1.2}
\begin{longtable}{ccc}
	\caption{Estimate and standard error of the posterior mean for the unordered multinomial model $(J=14)$}
	\endfirsthead
	\endhead
	\hline
	&estimate & standard error\\
	\hline
	$\Theta$& $\left[ \begin{array}{cc} -15.40 & 49.92  \\ 35.78 & 79.79 \\ 45.94 & 32.11\\ 4.97 & -0.76 \\-23.23 & -12.58 \\-15.09&-14.88\\23.43&-21.71\\-11.87&1.67\\-0.96&-5.06\\-0.27&-9.82\\1.58&-7.46\\-12.43&-14.68\\-28.97&-7.74\\-28.49&-3.38\end{array}\right]$ &$\left[ \begin{array}{cc} 0.93 & 0.85  \\ 0.52 & 0.81 \\ 0.58 & 0.75\\ 0.66 & 0.73 \\0.60 & 0.71 \\0.62&0.67\\0.68&0.73\\0.74&0.80\\0.63&0.64\\0.63&0.69\\0.70&0.78\\0.57&0.60\\0.64&0.69\\0.53&0.57\end{array}\right]$   \\
	\hline
\end{longtable}

\begin{figure*}
	\begin{subfigure}[b]{0.3\textwidth}
		\centering
		\includegraphics[width=\textwidth]{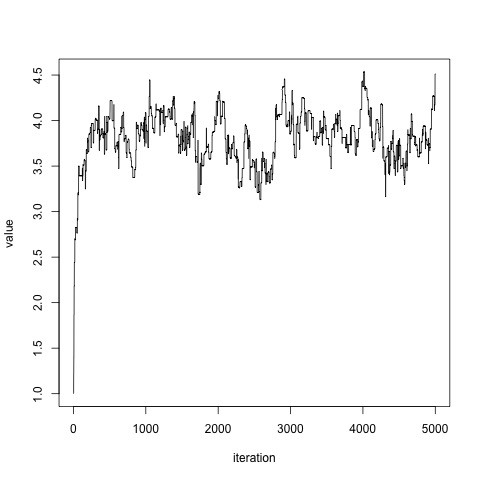}
		\caption{$J=5$}
	\end{subfigure}
	\quad
	\begin{subfigure}[b]{0.3\textwidth}
		\centering
		\includegraphics[width=\textwidth]{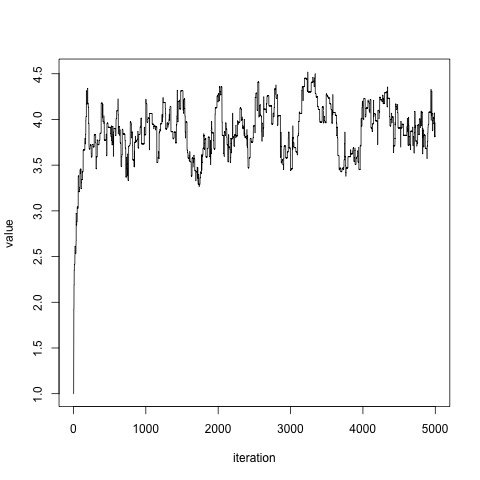}
		\caption{$J=7$}
	\end{subfigure}
	\quad
	\begin{subfigure}[b]{0.3\textwidth}
		\centering
		\includegraphics[width=\textwidth]{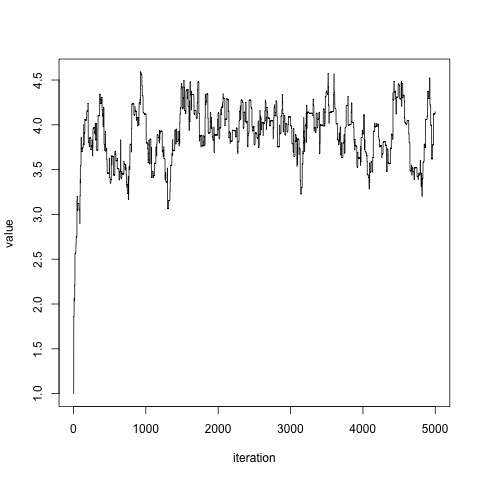}
		\caption{$J=9$}
	\end{subfigure}
	\vskip\baselineskip
	\begin{subfigure}[b]{0.3\textwidth}
		\centering
		\includegraphics[width=\textwidth]{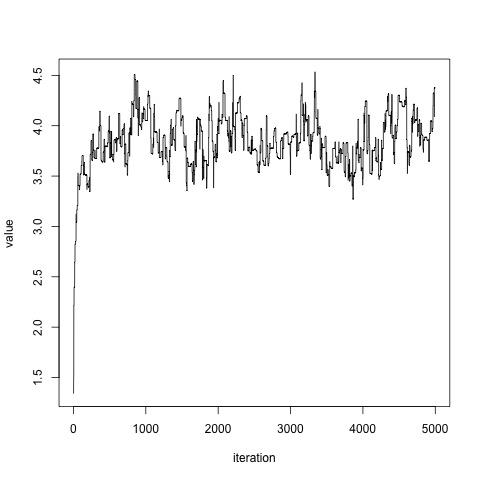}
		\caption{$J=11$}
	\end{subfigure}
	\quad
	\begin{subfigure}[b]{0.3\textwidth}
		\centering
		\includegraphics[width=\textwidth]{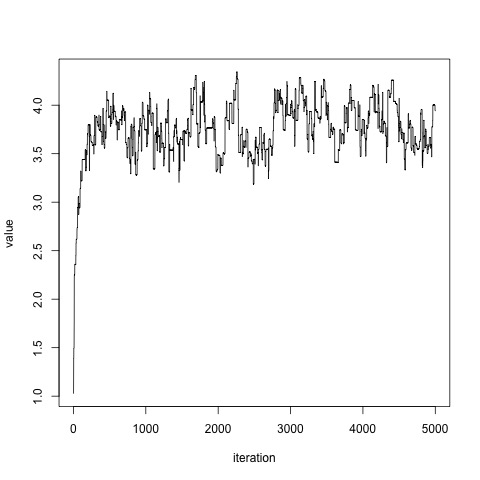}
		\caption{$J=13$}
	\end{subfigure}
	\quad
	\begin{subfigure}[b]{0.3\textwidth}
		\centering
		\includegraphics[width=\textwidth]{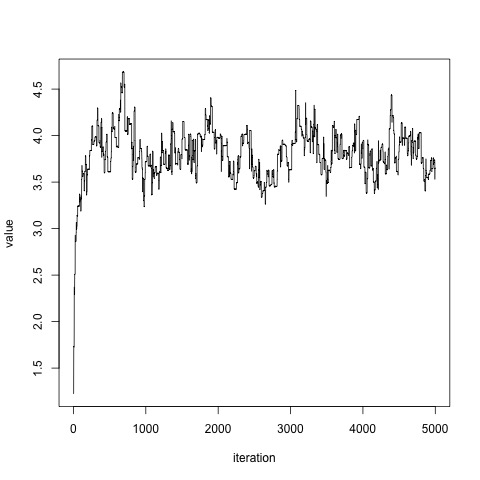}
		\caption{$J=15$}
	\end{subfigure}
	\quad
	\caption{$\gamma_2$ sampled from Metropolis-Hastings when $J$=5-7 and 13-15}
\end{figure*}
\newpage
\bibliographystyle{imsart-nameyear}
\bibliography{ref}

\end{document}